\documentclass[11pt, a4paper]{article}

\usepackage[utf8]{inputenc}
\usepackage[T1]{fontenc}
\usepackage[margin=1in]{geometry} 
\usepackage{amsmath, amssymb, amsthm, amsfonts} 
\usepackage{mathtools}          
\usepackage{natbib}             
\usepackage{graphicx}           
\usepackage{booktabs}
\usepackage{float}              
\usepackage{placeins}
\usepackage{algorithm}          
\usepackage{algorithmic}        
\usepackage{bm}      
\usepackage{rotating}
\usepackage{setspace}           
\usepackage{lineno}             
\usepackage{color}              
\usepackage{multirow}           
\usepackage{caption}            
\usepackage{subcaption}         
\usepackage{authblk}            
\usepackage{url}    
\usepackage{pgfplots}

\usepackage{xcolor}

\usepackage[colorlinks=true, citecolor=blue, linkcolor=blue, urlcolor=cyan]{hyperref}
\newcommand{\argmin}{\mathop{\mathrm{argmin}}}

\newtheorem{theorem}{Theorem}
\newtheorem{lemma}{Lemma}
\newtheorem{proposition}{Proposition}
\newtheorem{corollary}{Corollary}
\newtheorem{assumption}{Assumption}
\newtheorem{remark}{Remark}

\title{\textbf{Group-Sparse Smoothing for Longitudinal Models with Time-Varying Coefficients}}

\author[1]{Yu Lu}
\author[1]{Tianni Zhang}
\author[1]{Yuyao Wang}
\author[1]{Mengfei Ran\footnote{Corresponding author. \url{mengfei.ran@xjtlu.edu.cn}}}
\affil[1]{Academy of Life and Natural Sciences, Xi'an Jiaotong-Liverpool University}

\date{}

\begin{document}

\maketitle

\begin{abstract}
Longitudinal associations may vary over time, yet allowing every regression effect to be dynamic can inflate estimation variance and obscure interpretable structure. We develop time-varying-effect selection (TV-Select), a group-sparse smoothing framework that classifies covariate effects as zero, constant, or time varying. Each coefficient is decomposed into a constant mean and a centered temporal deviation represented by a full-rank, L2-normalized effective spline basis. A group penalty identifies varying components, while a roughness penalty controls their curvature. The resulting convex criterion is solved by cyclic block proximal-gradient updates and followed by smooth refitting. Under a full-column-rank unpenalized design and an effective model dimension that is small relative to the total number of observations, we establish prediction and parameter rates, blockwise function-estimation bounds, and exact recovery of the varying set under irrepresentability and beta-min conditions. A stable classification refit further separates zero from constant effects. For fixed-dimensional contrasts, we construct an oracle-equivalent one-step estimator with cluster-robust asymptotic normality and consistent sandwich variance estimation. Simulations demonstrate that TV-Select combines low false-positive rates with accurate function estimation and competitive prediction across a range of longitudinal settings. An application to Sleep-EDF data produces smooth and parsimonious temporal effect estimates with essentially unchanged held-out predictive performance.

\vspace{0.5cm}
\noindent \textbf{Keywords:} Structural identification; Group sparsity;
Smooth refitting; Post-selection inference.
\end{abstract}

\section{Introduction}\label{sec:1}

Longitudinal studies collect repeated measurements on the same subjects and are widely used in biomedical, public health, and social science research. \citet{diggle2002analysis} described general modeling principles for longitudinal outcomes, while \citet{laird1982random} established the random-effects framework underlying linear mixed models for subject-specific heterogeneity. From a marginal perspective, \citet{liang1986longitudinal} introduced generalized estimating equations (GEE) for population-level inference under working correlation structures.

Constant-effect regression cannot represent associations that evolve over time. Varying-coefficient models (VCMs) address this restriction by allowing regression effects to change smoothly with time \citep{hastie1993varying}. Their statistical foundations include estimation theory for VCMs \citep{fan1999statistical}, nonparametric smoothing for longitudinal observations \citep{hoover1998nonparametric}, and a general treatment of nonparametric longitudinal regression \citep{wu2006nonparametric}. In many applications, however, temporal flexibility is needed for only a subset of predictors; the remaining effects are constant or inactive.

Structural identification in longitudinal VCMs requires determining whether each covariate is ($i$) irrelevant, ($ii$) constant, or ($iii$) time-varying. Penalization provides a general mechanism for this task. The Lasso \citep{tibshirani1996regression} established a basic framework for sparse estimation, and the nonconcave penalization approach of \citet{fan2001variable} provided oracle-type results. For grouped structures, \citet{yuan2006model} introduced the Group-Lasso, while \citet{meier2008group} extended blockwise penalties to generalized models. In the varying-coefficient setting, \citet{wang2008variable} developed spline-based variable selection methods for nonparametric VCMs, and \citet{wei2011group} studied estimation and selection in high-dimensional VCMs. Related work on additive models \citep{huang2010variable} established componentwise sparsity results under basis expansions.

Smoothness control is also required for interpretable coefficient trajectories. Classical spline approximation theory is summarized by \citet{schumaker1981spline}. \citet{eilers1996flexible} combined B-splines with roughness penalties in P-splines, and \citet{antoniadis2012penalized} adapted this construction to variable selection in VCMs. Additive regression provides a broader structured nonparametric setting \citep{stone1985additive}, and component-selection methods show that sparsity and smoothing can be estimated jointly \citep{lin2006component}. A longitudinal VCM procedure needs to control structural sparsity and functional regularity within the same criterion.

Efficient estimation and variable selection have been studied for semiparametric varying-coefficient partially linear models \citep{kai2011semiparametric}, and consistent model selection has been established for marginal generalized additive models \citep{xue2010consistent}. High-dimensional longitudinal selection and inference have also been developed through penalized GEE and time-varying coefficient models \citep{wang2012pgee,chen2018inference}. The literature addresses three related objectives: structural recovery, smooth function estimation, and valid longitudinal inference.

The spline-selection procedures of \citet{wang2008variable} and \citet{antoniadis2012penalized} are the closest precedents for the present work. TV-Select differs by combining an explicit classification of zero, constant, and varying effects with a centered, full-rank, $L_2$-normalized deviation basis, distinct selection and smooth-refitting estimators, and cluster-robust post-selection inference. The earlier methods address related variable-selection and smoothing problems, but not this combination of structural classification, normalized effective coordinates, and oracle-equivalent cluster-robust inference.

TV-Select addresses structural identification in longitudinal VCMs. Each coefficient is decomposed into a constant mean and a centered deviation, and time variation is identified through nonzero deviation blocks on an $L_2$-normalized effective spline basis. The convex criterion combines block sparsity and roughness regularization and is solved by cyclic proximal-gradient updates with explicit Lipschitz step sizes and a KKT stopping rule. The theoretical results include estimation and blockwise function bounds, exact recovery of the time-varying set, rates for smooth refitting, consistent separation of zero and constant effects, and cluster-robust post-selection inference through an oracle-equivalent one-step estimator. The simulation design uses a common data-generating mechanism, preprocessing rule, tuning protocol, and set of evaluation metrics across methods. Because the $p$ constant components are unpenalized, the present theory requires a full-column-rank $\bar{\mathbf X}$ and $p+s_vq_e=o(n_\bullet)$. When baseline covariates are time invariant within subject, $\operatorname{rank}(\bar{\mathbf X})\le N$, so the scope additionally requires $p<N$ and does not include the ultra-high-dimensional regime.

The rest of the paper is organized as follows. Section~\ref{sec:2} introduces the model, the structural decomposition, and the penalized estimator. Section~\ref{sec:3} gives the block proximal-gradient algorithm, and Section~\ref{sec:4} develops the asymptotic results. Sections~\ref{sec:5} and \ref{sec:6} contain the simulation study and Sleep-EDF application, respectively. Section~\ref{sec:7} concludes with discussion and extensions.

\section{Methodology}\label{sec:2}

\subsection{Longitudinal Model}

Consider a longitudinal study with $N$ subjects. For subject $i$ ($i=1,\dots,N$), we observe a response $y_{ij}$ and a $p$-dimensional covariate vector $\mathbf{x}_{ij}=(x_{ij1},\dots,x_{ijp})^\top$ at time $t_{ij}$, $j=1,\dots,n_i$. Let $n_\bullet=\sum_{i=1}^N n_i$ denote the total number of observations. Without loss of generality, the observation times are rescaled to $t_{ij}\in[0,1]$.

Conditional on the observed design and times, we assume the working mean model
\begin{equation}\label{eq:model}
y_{ij}
=
\beta_0
+\sum_{k=1}^p x_{ijk}\beta_k(t_{ij})
+\varepsilon_{ij},
\end{equation}
where $\beta_k(\cdot)$ are unknown smooth coefficient functions and $\mathbb{E}(\varepsilon_{ij}\mid\mathcal D)=0$, with $\mathcal D$ denoting the full design and observation schedule. Dependence among observations from the same subject is allowed; subjects are the independent sampling units in the theory. Equation~\eqref{eq:model} specifies the conditional mean and does not impose working independence. The least-squares criterion uses working independence, while score concentration and cluster-robust variance retain within-subject dependence, consistent with the distinction between mean and association models in longitudinal analysis \citep{diggle2002analysis,liang1986longitudinal}. Explicit covariance modeling can improve efficiency but is outside the present selection criterion \citep{bai2023nvcssl}.

For each $k=1,\dots,p$, define
\[
\mu_k=\int_0^1\beta_k(u)\,du
\]
and decompose
\begin{equation*}
\beta_k(t)=\mu_k+g_k(t),
\qquad
\int_0^1 g_k(t)\,dt=0.
\end{equation*}
Here $\mu_k$ is the uniform-time average of the $k$th coefficient function over the normalized interval $[0,1]$, and $g_k(t)$ is its centered temporal deviation. The definition uses Lebesgue measure; for irregular visits it need not equal an observation-time-weighted average, and it has no causal interpretation without additional assumptions. The resulting structural partition is
\[
\mathcal{S}_{\mathrm{zero}}
=\{k:\mu_k=0,\ g_k(t)\equiv0\},\quad
\mathcal{S}_{\mathrm{const}}
=\{k:\mu_k\neq0,\ g_k(t)\equiv0\},\quad
\mathcal{S}_{\mathrm{vary}}
=\{k:g_k(t)\not\equiv0\}.
\]
The goal of TV-Select is to recover this structure while estimating the nonzero coefficient functions.

\subsection{Spline Approximation and Effective Subspace}

Let $\boldsymbol{B}(t)=(B_1(t),\dots,B_q(t))^\top$ be a degree-$d$ B-spline basis on $[0,1]$ \citep{schumaker1981spline}. Define $\bar{\boldsymbol B}=\int_0^1\boldsymbol B(u)\,du$ and
\[
\tilde{\boldsymbol B}(t)=\boldsymbol B(t)-\bar{\boldsymbol B}.
\]
Then $\int_0^1\tilde{\boldsymbol B}(t)\,dt=0$.  Because B-splines form a partition of unity, the centered raw basis has a deterministic null direction. Define its population Gram matrix
\[
G_B=\int_0^1\tilde{\boldsymbol B}(t)
\tilde{\boldsymbol B}(t)^\top dt
=U_e\Lambda_eU_e^\top,
\]
where $\Lambda_e$ contains the strictly positive eigenvalues and $U_e\in\mathbb R^{q\times q_e}$ the associated orthonormal eigenvectors. For ordinary centered B-splines, $q_e=q-1$.  We use the normalized effective basis
\begin{equation}\label{eq:normalized_basis}
\boldsymbol C(t)=\Lambda_e^{-1/2}U_e^\top
\tilde{\boldsymbol B}(t)\in\mathbb R^{q_e}.
\end{equation}
It satisfies
\[
\int_0^1\{\boldsymbol C(t)^\top a\}^2dt=\|a\|_2^2,
\qquad a\in\mathbb R^{q_e}.
\]
Under this normalization, the group norm is exactly the $L_2$ norm of the represented centered function. In particular, $\boldsymbol\theta_k\in\mathbb R^{q_e}$ is unique and
\[
g_k(t)\approx \boldsymbol C(t)^\top\boldsymbol\theta_k.
\]
In computation, $G_B$ is evaluated by high-order numerical quadrature and only eigenvalues exceeding a fixed relative rank tolerance are retained. This implementation of \eqref{eq:normalized_basis} removes the deterministic null direction before estimation rather than imposing an arbitrary coefficient constraint. A Moore--Penrose solution or a ridge solve $(A+\rho I)^{-1}$ may be used only in a separately labeled numerical sensitivity analysis for an otherwise singular auxiliary fit. Neither is substituted for the full-rank selection, classification, or inferential estimators analyzed below, and a ridge solve is not treated as algebraically equivalent to $A^\dagger$.

For each $k$, define the $n_\bullet\times q_e$ matrix $\mathbf Z_k$ with rows $x_{ijk}\boldsymbol C(t_{ij})^\top$. Let $\mathbf X$ be the stacked covariate matrix. The approximating model can be written as
\[
\mathbf y
\approx
\beta_0\mathbf 1_{n_\bullet}
+\mathbf X\boldsymbol\mu
+\sum_{k=1}^p\mathbf Z_k\boldsymbol\theta_k
+\boldsymbol\varepsilon.
\]
If $\Omega_B$ is the raw-basis roughness matrix and $T=U_e\Lambda_e^{-1/2}$, then the corresponding matrix in normalized coordinates is $\Omega=T^\top\Omega_BT\succeq0$. We rescale it so $\|\Omega\|_{\rm op}\le1$; this changes only the numerical scale of $\lambda_2$. The implementation uses the P-spline difference penalty $\Omega_B=D_2^\top D_2$ \citep{eilers1996flexible}, whereas an integrated squared second-derivative matrix may be used without changing the arguments below.

\subsection{Penalized Selection Estimator}

For tuning parameters $\lambda_1>0$ and $\lambda_2\ge0$, we estimate $\Theta=(\beta_0,\boldsymbol\mu^\top,\boldsymbol\theta_1^\top,\dots, \boldsymbol\theta_p^\top)^\top$ by minimizing the scaled criterion
\begin{equation}\label{eq:objective}
\widehat\Theta
=
\argmin_{\Theta}
\left[
\frac{1}{2n_\bullet}
\left\|\mathbf y-\beta_0\mathbf 1_{n_\bullet}-\mathbf X\boldsymbol\mu-\sum_{k=1}^p\mathbf Z_k\boldsymbol\theta_k\right\|_2^2
+\sum_{k=1}^p
\left\{
\lambda_1\|\boldsymbol\theta_k\|_2+\lambda_2\boldsymbol\theta_k^\top\Omega\boldsymbol\theta_k
\right\}
\right].
\end{equation}
The group penalty $\lambda_1\|\boldsymbol\theta_k\|_2$ selects time-varying effects by shrinking entire deviation blocks to zero, following the group-sparsity principle \citep{yuan2006model,lounici2011oracle}. The roughness penalty $\lambda_2\boldsymbol\theta_k^\top\Omega\boldsymbol\theta_k$ stabilizes the selected curves, as in penalized spline selection for VCMs \citep{antoniadis2012penalized}. The time-varying set is estimated by
\[
\widehat{\mathcal S}_{\mathrm{vary}}
=\{k:\|\widehat{\boldsymbol\theta}_k\|_2>0\}.
\]
The penalized estimator in \eqref{eq:objective} is used only for structural selection.

\subsection{Refitting and Classification}

Coefficient curves and predictions are constructed from a post-selection smooth refit that removes group-penalty shrinkage while retaining the selected amount of smoothness. Separating selection from refitting follows the bias-reduction principle underlying oracle and adaptive penalization \citep{fan2001variable,zou2006adaptive,wang2008adaptivegroup}:
\begin{equation}\label{eq:smooth_refit}
(\hat\gamma^{\,F},\hat{\boldsymbol\theta}^{\,F}_{\widehat S})
=
\argmin_{\gamma,\boldsymbol\theta_{\widehat S}}
\left\{
\frac{1}{2n_\bullet}
\|\mathbf y-\bar{\mathbf X}\gamma-
\mathbf Z_{\widehat S}\boldsymbol\theta_{\widehat S}\|_2^2
+\hat\lambda_2
\sum_{k\in\widehat S}
\boldsymbol\theta_k^\top\Omega\boldsymbol\theta_k
\right\},
\qquad
\widehat S=\widehat{\mathcal S}_{\mathrm{vary}}.
\end{equation}
Equation~\eqref{eq:smooth_refit} removes the group penalty only after selection and retains the EBIC-selected roughness penalty. The resulting estimator is distinct from a fully unpenalized refit.

The distinction between constant and varying components is also central to structure specification in semi-varying coefficient and panel models \citep{li2015gsvcm,ke2016structure}. TV-Select classifies this structure through an additional unpenalized refit. Specifically, for a candidate set $A\subset\{1,\ldots,p\}$, let
\begin{equation}\label{eq:structure_refit}
(\hat\gamma^{\,R}(A),\hat{\boldsymbol\theta}^{\,R}_A)
=
\argmin_{\gamma,\boldsymbol\theta_A}
\frac{1}{2n_\bullet}
\|\mathbf y-\bar{\mathbf X}\gamma-\mathbf Z_A\boldsymbol\theta_A\|_2^2,
\qquad
\bar{\mathbf X}=[\mathbf 1,\mathbf X],
\end{equation}
where $\gamma=(\beta_0,\boldsymbol\mu^\top)^\top$. The classification refit is performed only if $D_A^{\rm all}=[\bar{\mathbf X},\mathbf Z_A]$ has full numerical column rank under a prespecified tolerance. If rank deficiency is detected, the classification is flagged as non-identifiable rather than replaced by a nonunique generalized-inverse solution. A ridge-stabilized refit may be examined separately in a sensitivity analysis, but it is a different estimator and is not covered by Corollary~\ref{cor:mu_threshold}. Write $\hat{\boldsymbol\mu}^{\,R}$ for the constant-effect coordinates from \eqref{eq:structure_refit} with $A=\widehat{\mathcal S}_{\mathrm{vary}}$. Zero and constant effects among the non-varying variables are separated by thresholding this refitted estimator:
\[
\widehat{\mathcal S}_{\mathrm{const}}
=
\{k\notin\widehat{\mathcal S}_{\mathrm{vary}}:|\hat\mu^{\,R}_k|>\tau_N\},\qquad
\widehat{\mathcal S}_{\mathrm{zero}}
=
\{k\notin\widehat{\mathcal S}_{\mathrm{vary}}:|\hat\mu^{\,R}_k|\le\tau_N\},
\]
where the precise order of $\tau_N$ is given in Corollary~\ref{cor:mu_threshold}. The refit is used only after structural selection and is not used to construct coefficient curves or predictions.

\subsection{Post-selection Inference}

Inference for longitudinal and high-dimensional VCMs requires accounting for both temporal nuisance functions and within-subject dependence \citep{chen2018inference,hu2021robust,dai2021vcqr}. The inferential estimator in Theorem~\ref{thm:T3_oracle} can be written either as a post-selection oracle refit or as an exactly equivalent one-step correction. It is not used for prediction or curve estimation. After obtaining $\widehat S=\widehat{\mathcal S}_{\mathrm{vary}}$ and $\widehat S_c=\widehat{\mathcal S}_{\mathrm{const}}$, define the inferential post-selection design
\[
\widehat D^{\,I}
=
[\mathbf1,\mathbf X_{\widehat S_c},\mathbf X_{\widehat S},
\mathbf Z_{\widehat S}]
\]
and refit all of its coefficients by ordinary least squares.  We denote the constant-effect coordinates of this second refit by $\hat{\boldsymbol\mu}^{\,PS}_{\widehat S_c}$.  Equivalently, let $\hat W=[\mathbf1,\mathbf X_{\widehat S},\mathbf Z_{\widehat S}]$, $\tilde{\mathbf X}_{\widehat S_c}=M_{\hat W}\mathbf X_{\widehat S_c}$, and $\hat A_{\widehat S_c}=\tilde{\mathbf X}_{\widehat S_c}^\top\tilde{\mathbf X}_{\widehat S_c}/n_\bullet$.  Starting from any coefficients $(\hat{\boldsymbol\mu}^{\,P}_{\widehat S_c},\hat\eta^P)$ in the selected model, define
\begin{equation}\label{eq:one_step_definition}
\hat{\boldsymbol\mu}^{\,DB}_{\widehat S_c}
=\hat{\boldsymbol\mu}^{\,P}_{\widehat S_c}
+\hat A_{\widehat S_c}^{-1}
\frac{\tilde{\mathbf X}_{\widehat S_c}^\top
\{\mathbf y-\mathbf X_{\widehat S_c}
\hat{\boldsymbol\mu}^{\,P}_{\widehat S_c}-\hat W\hat\eta^P\}}
{n_\bullet}.
\end{equation}
Because $\tilde{\mathbf X}_{\widehat S_c}^\top\hat W=0$, the starting values cancel and \eqref{eq:one_step_definition} is exactly the Frisch--Waugh--Lovell coefficient from the post-selection least-squares refit. The refit in \eqref{eq:structure_refit} retains all columns of $\mathbf X$ and is used to classify non-varying variables. The lower-dimensional design $\widehat D^{\,I}$ is used for post-selection inference, with $\hat{\boldsymbol\mu}^{\,DB}_{\widehat S_c}=\hat{\boldsymbol\mu}^{\,PS}_{\widehat S_c}$ whenever the displayed inverse exists.

The four estimators have distinct roles:
\begin{center}
\small
\begin{tabular}{lll}
\toprule
Estimator & Defining equation & Role\\
\midrule
$\widehat\Theta$ & \eqref{eq:objective} & select
$\widehat{\mathcal S}_{\mathrm{vary}}$\\
$(\hat\gamma^{\,F},\hat{\boldsymbol\theta}^{\,F})$
& \eqref{eq:smooth_refit} & estimate curves and predict\\
$(\hat\gamma^{\,R},\hat{\boldsymbol\theta}^{\,R})$
& \eqref{eq:structure_refit} & classify constant and zero effects\\
$\hat{\boldsymbol\mu}^{\,PS}
=\hat{\boldsymbol\mu}^{\,DB}$ & \eqref{eq:one_step_definition}
& post-selection inference\\
\bottomrule
\end{tabular}
\end{center}

\section{Algorithm}\label{sec:3}

The criterion in \eqref{eq:objective} is minimized by cyclic block proximal gradient, a block-separable optimization strategy related to coordinate methods for nonsmooth objectives \citep{tseng2001convergence}. The unpenalized block $(\beta_0,\boldsymbol\mu)$ is updated by least squares conditional on the current varying blocks, and each $\boldsymbol\theta_k$ is updated by one proximal-gradient step.

\subsection{Block Updates}

Given current $\{\boldsymbol\theta_k\}$, the intercept and constant effects are updated jointly by
\begin{equation}\label{eq:update_mu_joint}
\gamma=
\begin{pmatrix}\beta_0\\ \boldsymbol\mu\end{pmatrix}
\leftarrow
\left(\frac{1}{n_\bullet}\bar{\mathbf X}^\top\bar{\mathbf X}\right)^{-1}
\frac{1}{n_\bullet}\bar{\mathbf X}^\top
\left(\mathbf y-\sum_{k=1}^p\mathbf Z_k\boldsymbol\theta_k\right),
\end{equation}
which is well defined under the full-rank condition used in Proposition~\ref{prop:algorithm_convergence} and Assumption~\ref{ass:A1}.  Equivalently, this update is the least-squares fit of $\mathbf y-\sum_k\mathbf Z_k\boldsymbol\theta_k$ on $\bar{\mathbf X}$. A numerical sensitivity analysis may instead use $\{\bar{\mathbf X}^{\top}\bar{\mathbf X}/n_\bullet+\rho_\mu R_\mu\}^{-1}$, where $R_\mu=\operatorname{diag}(0,1,\ldots,1)$ and $\rho_\mu>0$, but that ridge-perturbed estimator is not the target of \eqref{eq:objective} or the theory below.

For the $k$th time-varying block, define the partial residual
\[
\mathbf r^{(k)}
=
\mathbf y-\beta_0\mathbf 1_{n_\bullet}-\mathbf X\boldsymbol\mu
-\sum_{\ell\neq k}\mathbf Z_\ell\boldsymbol\theta_\ell.
\]
The smooth part of the block objective is
\[
\ell_k(\boldsymbol\theta)
=
\frac{1}{2n_\bullet}\|\mathbf r^{(k)}-\mathbf Z_k\boldsymbol\theta\|_2^2
+\lambda_2\boldsymbol\theta^\top\Omega\boldsymbol\theta,
\]
with gradient
\begin{equation*}
\nabla\ell_k(\boldsymbol\theta)
=
-\frac{1}{n_\bullet}\mathbf Z_k^\top(\mathbf r^{(k)}-\mathbf Z_k\boldsymbol\theta)
+2\lambda_2\Omega\boldsymbol\theta.
\end{equation*}
A valid Lipschitz constant for this gradient is
\begin{equation}\label{eq:lipschitz}
L_k
=
\lambda_{\max}\!\left(\frac{1}{n_\bullet}\mathbf Z_k^\top\mathbf Z_k+2\lambda_2\Omega\right).
\end{equation}
If $L_k=0$, the block is identically uninformative and is set to zero. Otherwise, with $0<\alpha_k\le L_k^{-1}$, the proximal-gradient update is
\begin{equation}\label{eq:pg_update}
\mathbf v_k
=
\boldsymbol\theta_k-\alpha_k\nabla\ell_k(\boldsymbol\theta_k),
\qquad
\boldsymbol\theta_k
\leftarrow
\left(1-\frac{\alpha_k\lambda_1}{\|\mathbf v_k\|_2}\right)_+\mathbf v_k.
\end{equation}
This update does not increase the block objective. For the sufficient-descent
and convergence result in Proposition~\ref{prop:algorithm_convergence}, we use $\alpha_k\in[\underline{\alpha},(1-\nu)L_k^{-1}]$ for some \(\nu\in(0,1)\). The implementation uses \(\alpha_k=0.99L_k^{-1}\), subject to the stated numerical cap. The expression is the exact proximal-gradient update for the block objective and guarantees descent for $\alpha_k\le L_k^{-1}$.

To make convergence numerically verifiable, let
\[
\mathbf e=\mathbf y-\bar{\mathbf X}\gamma
-\sum_{k=1}^p\mathbf Z_k\boldsymbol\theta_k
\]
at a coherent sweep-end iterate and define
\[
R_\gamma=
\left\|\frac{\bar{\mathbf X}^{\top}\mathbf e}{n_\bullet}\right\|_\infty,
\qquad
R_k=
\begin{cases}
\left\|-\dfrac{\mathbf Z_k^\top\mathbf e}{n_\bullet}
+2\lambda_2\Omega\boldsymbol\theta_k
+\lambda_1\dfrac{\boldsymbol\theta_k}{\|\boldsymbol\theta_k\|_2}
\right\|_2,
& \boldsymbol\theta_k\ne0,\\[10pt]
\left(\left\|\dfrac{\mathbf Z_k^\top\mathbf e}{n_\bullet}\right\|_2
-\lambda_1\right)_+,
& \boldsymbol\theta_k=0.
\end{cases}
\]
The KKT residual is $R_{\mathrm{KKT}}=\max\{R_\gamma,R_1,\ldots,R_p\}$. The zero-block expression is the distance from the negative smooth gradient to the group-penalty subdifferential and therefore remains well defined at the origin.

\begin{algorithm}[H]
\caption{Block Coordinate Proximal-Gradient Algorithm}
\label{alg:bcd}
\begin{algorithmic}[1]
\STATE \textbf{Input:} Data $\{y_{ij},\mathbf x_{ij},t_{ij}\}$; normalized effective basis $\boldsymbol C(\cdot)$; penalties $\lambda_1,\lambda_2$; tolerances $\epsilon_{\rm obj},\epsilon_{\rm KKT}$.
\STATE Construct $\mathbf y$, $\mathbf X$, and $\{\mathbf Z_k\}_{k=1}^p$.
\STATE Initialize $\boldsymbol\theta_k^{(0)}=\mathbf0$ and initialize $(\beta_0^{(0)},\boldsymbol\mu^{(0)})$ by the constant-effect fit.
\FOR{$t=0,1,2,\ldots$}
\STATE Update $(\beta_0,\boldsymbol\mu)$ jointly by \eqref{eq:update_mu_joint}.
\FOR{$k=1,\ldots,p$}
\STATE Form $\mathbf r^{(k)}$ and compute $L_k$ in \eqref{eq:lipschitz}.
\STATE Update $\boldsymbol\theta_k$ by the proximal-gradient step \eqref{eq:pg_update}.
\ENDFOR
\STATE Recompute $\gamma$ by \eqref{eq:update_mu_joint} so the iterate is coherent.
\STATE Stop if both the relative objective decrease is below $\epsilon_{\rm obj}$ and $R_{\mathrm{KKT}}\le\epsilon_{\rm KKT}$.
\ENDFOR
\STATE Form $\widehat S=\{k:\|\hat{\boldsymbol\theta}_k\|_2>0\}$ and compute the smooth refit \eqref{eq:smooth_refit}.
\STATE \textbf{Output:} $\widehat S$ from the penalized stage and the smooth-refit curves
$\hat\beta_k^{\,F}(t)=\hat\mu_k^{\,F}+
\boldsymbol C(t)^\top\hat{\boldsymbol\theta}_k^{\,F}$.
\end{algorithmic}
\end{algorithm}

\begin{proposition}[Convergence of Algorithm~\ref{alg:bcd}]
\label{prop:algorithm_convergence}
Suppose $\lambda_1>0$, $\bar{\mathbf X}$ has full column rank, and the initial objective level set is bounded in the normalized effective coordinates.  For every informative block, use $\alpha_k\in[\underline\alpha,(1-\nu)L_k^{-1}]$ for constants $\underline\alpha>0$ and $\nu\in(0,1)$; an uninformative block with $L_k=0$ is fixed at zero.  Then the objective values generated by Algorithm~\ref{alg:bcd} decrease monotonically, the block increments converge to zero, and every accumulation point is a global minimizer of \eqref{eq:objective}. The fitted values converge.  If the minimizer is unique in the normalized effective coordinates, the entire coefficient sequence converges to it.
\end{proposition}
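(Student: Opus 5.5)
The plan is to treat Algorithm~\ref{alg:bcd} as an instance of block coordinate descent for a convex composite objective $F(\Theta)=f(\Theta)+\sum_k h_k(\boldsymbol\theta_k)$. Here $f$ is the scaled least-squares loss plus the quadratic roughness terms, which is convex and smooth with Lipschitz block gradients, and $h_k=\lambda_1\|\cdot\|_2$ is convex, nonsmooth and block-separable. The argument follows \citet{tseng2001convergence} and proceeds in four steps: sufficient decrease, vanishing increments, stationarity of limit points, and convexity to conclude global optimality.

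\textbf{Step 1 (sufficient decrease).} For the $\gamma$-block, the update \eqref{eq:update_mu_joint} is an exact minimization. Because $\bar{\mathbf X}^\top\bar{\mathbf X}/n_\bullet\succeq c_0 I$ with $c_0>0$ by full column rank, the block objective is strongly convex in $\gamma$. Hence $F$ decreases by at least $\tfrac{c_0}{2}\|\gamma^{+}-\gamma\|_2^2$.

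For an informative $\boldsymbol\theta_k$-block, apply the descent lemma with constant $L_k$. Combine it with the prox optimality inequality $h_k(\boldsymbol\theta^+)+\alpha_k^{-1}\langle \boldsymbol\theta^+-\mathbf v_k,\boldsymbol\theta^+-\boldsymbol\theta\rangle\le h_k(\boldsymbol\theta)$. This gives
\[
F(\ldots,\boldsymbol\theta_k^+,\ldots)\le F(\ldots,\boldsymbol\theta_k,\ldots)-\Bigl(\tfrac{1}{\alpha_k}-\tfrac{L_k}{2}\Bigr)\|\boldsymbol\theta_k^+-\boldsymbol\theta_k\|_2^2 .
\]
Since $\alpha_k\le(1-\nu)/L_k$, the coefficient is at least $\tfrac{1+\nu}{2(1-\nu)}L_k$, and it is also at least $\tfrac{1}{2\alpha_k}\ge \tfrac{L_k}{2(1-\nu)}>0$. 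Uninformative blocks are fixed and contribute nothing. Summing over a sweep, including the final coherent $\gamma$ recomputation, shows that $F$ is monotone and that
\[
F^{(t)}-F^{(t+1)}\ge c\,\|\Theta^{(t+1)}-\Theta^{(t)}\|^2
\]
across all block increments, where $c>0$ is uniform. Uniformity holds because the $L_k$ are fixed data-dependent constants, not iterate-dependent.

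\textbf{Step 2 (increments vanish, boundedness).} $F\ge0$, so $F^{(t)}$ converges, and telescoping gives $\sum_t\|\Theta^{(t+1)}-\Theta^{(t)}\|^2<\infty$. In particular all block increments tend to zero. The iterates stay in the initial level set, which is bounded by assumption, so accumulation points exist.

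\textbf{Step 3 (limit points are minimizers).} This is the main obstacle: passing the blockwise prox optimality conditions to the limit while accounting for the cyclic ordering. Take a subsequence $\Theta^{(t_j)}\to\Theta^*$. Because increments vanish, every intermediate iterate inside the sweep (the partially updated vector used when block $k$ is processed) converges to $\Theta^*$ along the same subsequence. The prox step satisfies
\[
0\in\nabla_k f(\text{intermediate})+\alpha_k^{-1}(\boldsymbol\theta_k^+-\boldsymbol\theta_k)+\partial h_k(\boldsymbol\theta_k^+).
\]
The middle term tends to zero since $\alpha_k\ge\underline\alpha$. Continuity of $\nabla f$ and outer semicontinuity (closed graph) of $\partial\|\cdot\|_2$ then yield $0\in\nabla_k f(\Theta^*)+\partial h_k(\boldsymbol\theta_k^*)$ for every $k$. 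For $\gamma$, exact minimization gives $\nabla_\gamma f(\Theta^*)=0$ in the limit. Since $f$ is smooth and $h$ is separable, $\partial F=\nabla f+\prod_k\partial h_k$, so $0\in\partial F(\Theta^*)$, and convexity of $F$ makes $\Theta^*$ a global minimizer. The same argument shows $R_{\mathrm{KKT}}\to0$ along the iterates, which justifies the stopping rule.

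\textbf{Step 4 (fitted values and full convergence).} Let $D=[\bar{\mathbf X},\mathbf Z_1,\ldots,\mathbf Z_p]$. For a convex objective of the form $\tfrac{1}{2n}\|\mathbf y-D\Theta\|^2+\text{(convex)}$, all minimizers share the same fitted value $D\Theta$. Indeed, if two minimizers had different fits, the strict convexity of the squared loss in $D\Theta$ would make their midpoint strictly better. Hence every accumulation point has the same fitted vector $\hat{\mathbf m}$. The sequence $D\Theta^{(t)}$ is bounded and all its limit points equal $\hat{\mathbf m}$, so the fitted values converge. If the minimizer is unique, the bounded sequence $\Theta^{(t)}$ has a single accumulation point and therefore converges to it.
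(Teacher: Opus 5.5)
Your proposal is correct and follows essentially the same route as the paper: blockwise sufficient decrease, square-summable increments, passing the blockwise optimality conditions to the limit to recover the KKT system of the convex criterion, and strict convexity of the squared loss in the fitted values. The differences are minor. You bound the $\gamma$ increments through strong convexity of the exact least-squares block update, whereas the paper uses that $\gamma(\boldsymbol\theta)$ is an affine, hence Lipschitz, function of $\boldsymbol\theta$. Your proximal descent constant $1/\alpha_k-L_k/2$ is also sharper than the paper's $1/(2\alpha_k)-L_k/2$.
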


\section{Asymptotic Properties}\label{sec:4}

All spline coordinates in this section are the normalized effective coordinates of Section~\ref{sec:2}, for which $\|\boldsymbol C(\cdot)^\top a\|_{L_2}=\|a\|_2$ exactly. Put $n=n_\bullet$, $\bar{\mathbf X}=[\mathbf 1,\mathbf X]$, $P_X=P_{\bar{\mathbf X}}$, $M_X=I-P_X$, $\tilde{\mathbf Z}_k=M_X\mathbf Z_k$, and $\tilde{\mathbf Z}=[\tilde{\mathbf Z}_1,\ldots,\tilde{\mathbf Z}_p]$. For a block vector $v=(v_1^\top,\ldots,v_p^\top)^\top$, define
\[
\|v\|_{2,1}=\sum_{k=1}^p\|v_k\|_2,\qquad
\|v\|_{2,\infty}=\max_{1\le k\le p}\|v_k\|_2.
\]
The mixed norms are standard in oracle analyses of group-sparse estimators \citep{lounici2011oracle}.
Let $S=\mathcal S_{\mathrm{vary}}$, $s_v=|S|$, and $\mathcal C(S,3)=\{v:\|v_{S^c}\|_{2,1}\le3\|v_S\|_{2,1}\}$. Because $\boldsymbol\mu$ is unpenalized, the theory requires $\bar{\mathbf X}$ to have full column rank and $p+s_vq_e=o(n)$.  If the baseline covariates are time invariant within subject, then $\operatorname{rank}(\bar{\mathbf X})\le N$ and full column rank additionally requires $p<N$.  The condition $p<n$ alone is insufficient, and the present theory does not cover $p\gg n$.

\subsection{Assumptions}

\begin{assumption}[Independent subjects and clustered concentration]
\label{ass:A1}
Conditional on the design and observation times, the subject vectors $\boldsymbol\varepsilon_i=(\varepsilon_{i1},\ldots,\varepsilon_{in_i})^\top$ are independent, mean zero, and satisfy
\[
\|\langle a,\boldsymbol\varepsilon_i\rangle\|_{\psi_2}
\le K_\varepsilon\|a\|_2
\quad\text{for every deterministic }a\in\mathbb R^{n_i}.
\]
Within-subject dependence is unrestricted subject to this bound, and $\max_i n_i\le m_0<\infty$, so $n\asymp N$.  The eigenvalues of $n^{-1}\bar{\mathbf X}^{\top}\bar{\mathbf X}$ lie in $[\kappa_X,\kappa_X^{-1}]$, and
\[
\max_{k\le p}\lambda_{\max}
\left(n^{-1}\tilde{\mathbf Z}_k^\top\tilde{\mathbf Z}_k\right)
\le K_Z
\]
with probability tending to one.  The design is fixed or independent of the errors.
\end{assumption}

\begin{assumption}[Spline approximation]\label{ass:A2}
For every $k\in S$, $g_{0k}$ belongs to a Hölder ball of order $2<r\le d+1$, where $d$ is the spline degree. There is a unique normalized effective coefficient $\boldsymbol\theta_{0k}^\ast\in\mathbb R^{q_e}$ such that
\[
\sup_{t\in[0,1]}
\left|g_{0k}(t)-\boldsymbol C(t)^\top\boldsymbol\theta_{0k}^\ast\right|
\le K_gq_e^{-r}.
\]
Set $\boldsymbol\theta_{0k}^\ast=0$ for $k\notin S$. The stacked approximation remainder includes the covariates explicitly:
\[
r_{ij}
=
\sum_{k\in S}x_{ijk}
\{g_{0k}(t_{ij})
-\boldsymbol C(t_{ij})^\top\boldsymbol\theta_{0k}^\ast\},
\qquad
\mathbf r=(r_{11},\ldots,r_{Nn_N})^\top,
\]
where the stacking order is the same as for $\mathbf y$.  We assume
\[
\frac{\|\mathbf r\|_2^2}{n}=O(s_vq_e^{-2r}),\qquad
b_n:=
\left\|\frac{\tilde{\mathbf Z}^\top M_X\mathbf r}{n}\right\|_{2,\infty}.
\]
\end{assumption}

\begin{assumption}[Compatibility, joint identifiability, and active inverse]
\label{ass:A3}
There are constants $\kappa_Z,\kappa_D,C_H,C_G>0$ such that, with probability tending to one, for every $v\in\mathcal C(S,3)$ and every $a\in\mathbb R^{p+1}$,
\begin{align}
\frac{\|\tilde{\mathbf Z}v\|_2^2}{n}
&\ge \kappa_Z^2\|v_S\|_2^2,\label{eq:compatibility}\\
\frac{\|\bar{\mathbf X}a+\mathbf Zv\|_2^2}{n}
&\ge \kappa_D^2\{\|a\|_2^2+\|v_S\|_2^2\}.
\label{eq:joint_identifiability}
\end{align}
Let $\Omega_p=I_p\otimes\Omega$,
\[
H=\frac{\tilde{\mathbf Z}^\top\tilde{\mathbf Z}}{n}
+2\lambda_2\Omega_p,\qquad G_S=H_{SS}.
\]
The matrix $G_S$ is positive definite and its inverse is stable in block maximum norm:
\begin{equation}\label{eq:block_inverse_condition}
\|G_S^{-1}w_S\|_{2,\infty}\le C_G\|w_S\|_{2,\infty}
\qquad\text{for every block vector }w_S.
\end{equation}
Moreover, $\max_{j,k}\|n^{-1}\tilde{\mathbf Z}_j^\top\tilde{\mathbf Z}_k\|_{\rm op}
\le C_H$. The latter condition controls leakage from erroneously nonzero inactive blocks in the blockwise KKT bound.
\end{assumption}

\begin{assumption}[Growth and tuning]\label{ass:A4}
Let
\[
\zeta_n=\sqrt{\frac{q_e+\log p}{n}},\qquad
\delta_n=\max_{k\in S}2\lambda_2
\|\Omega\boldsymbol\theta_{0k}^\ast\|_2.
\]
We assume $q_e+\log p=o(n)$, $p+s_vq_e=o(n)$, $p/n+s_v\lambda_1^2+s_vq_e^{-2r}=o(1)$, and
\[
\lambda_1=A_n\zeta_n,\qquad
b_n+\delta_n\le c_0\lambda_1,
\]
where $A_n\to\infty$ sufficiently slowly, $\lambda_1\to0$, and $c_0>0$ is sufficiently small. The roughness matrix satisfies $\|\Omega\|_{\rm op}=O(1)$ in normalized effective coordinates. For exact support recovery we strengthen the last display to $b_n+\delta_n=o(\lambda_1)$.
\end{assumption}

\begin{assumption}[Group irrepresentability]\label{ass:A5}
For some $\eta\in(0,1)$,
\begin{equation}\label{eq:irrepresentable}
\max_{k\notin S}\ 
\sup_{\|u\|_{2,\infty}\le1}
\left\|H_{kS}G_S^{-1}u\right\|_2\le1-\eta.
\end{equation}
\end{assumption}

\begin{assumption}[Classification-refit identifiability]
\label{ass:A6}
For classification after varying-set recovery, let $D_S^{\rm all}=[\bar{\mathbf X},\mathbf Z_S]$ and $B_S=\{(D_S^{\rm all})^\top D_S^{\rm all}/n\}^{-1}$. There are constants $\kappa_R,C_R>0$ such that the eigenvalues of the Gram matrix lie in $[\kappa_R,\kappa_R^{-1}]$ and
\[
\max_{\ell\in\mathcal I_\mu}\|e_\ell^\top B_S\|_1\le C_R,
\]
where $\mathcal I_\mu$ indexes the $p$ constant-effect coordinates. The conditions concern the population sequence of selected designs; numerical rank tolerances used to implement the refit are not asymptotic assumptions.
\end{assumption}

\begin{lemma}[Coordinatewise rate of the classification refit]
\label{lem:classification_refit_rate}
Under Assumptions~\ref{ass:A1}, \ref{ass:A2}, and \ref{ass:A6}, suppose
$s_v=O(1)$ and
$\log\{p+1+s_vq_e\}=O(\log p)$.  Then
\begin{align}
\left\|\frac{(D_S^{\rm all})^\top
\boldsymbol\varepsilon}{n}\right\|_\infty
&=O_p\!\left(\sqrt{\frac{\log p}{n}}\right),\label{eq:refit_score_bound}\\
\left\|\frac{(D_S^{\rm all})^\top\mathbf r}{n}\right\|_\infty
&=O(q_e^{-r}),\label{eq:refit_remainder_bound}
\end{align}
and the oracle classification refit satisfies
\begin{equation}\label{eq:refit_stability}
\|\hat{\boldsymbol\mu}^{\,R}-\boldsymbol\mu_0\|_\infty
=O_p(r_{\mu,n}),
\qquad
r_{\mu,n}=\sqrt{\frac{\log p}{n}}+q_e^{-r}.
\end{equation}
\end{lemma}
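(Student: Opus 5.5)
The plan is to write the oracle classification refit (with $A=S$) in closed form and then bound its score and remainder terms coordinatewise. Under Assumption~\ref{ass:A6}, $D_S^{\rm all}$ has full column rank, so the refit is the ordinary least-squares coefficient
\[
\hat{\boldsymbol\xi}=B_S\,\frac{(D_S^{\rm all})^\top\mathbf y}{n},
\qquad
\boldsymbol\xi_0=(\beta_0,\boldsymbol\mu_0^\top,\boldsymbol\theta_{0S}^{\ast\top})^\top .
\]
Assumption~\ref{ass:A2} gives $\mathbf y=D_S^{\rm all}\boldsymbol\xi_0+\mathbf r+\boldsymbol\varepsilon$, and hence
\[
\hat{\boldsymbol\xi}-\boldsymbol\xi_0
=B_S\,\frac{(D_S^{\rm all})^\top(\boldsymbol\varepsilon+\mathbf r)}{n}.
\]
For each $\ell\in\mathcal I_\mu$, H\"older's inequality gives $|e_\ell^\top(\hat{\boldsymbol\xi}-\boldsymbol\xi_0)|\le\|e_\ell^\top B_S\|_1\,\|(D_S^{\rm all})^\top(\boldsymbol\varepsilon+\mathbf r)/n\|_\infty$. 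By the row-$\ell_1$ bound $C_R$ in Assumption~\ref{ass:A6}, \eqref{eq:refit_stability} therefore follows from \eqref{eq:refit_score_bound} and \eqref{eq:refit_remainder_bound}. The whole argument reduces to proving these two sup-norm bounds.

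\textbf{Score bound \eqref{eq:refit_score_bound}.} Take any column $d$ of $D_S^{\rm all}$, and split it into subject pieces $d_i\in\mathbb R^{n_i}$. Conditional on the design, $d^\top\boldsymbol\varepsilon=\sum_i\langle d_i,\boldsymbol\varepsilon_i\rangle$ is a sum of independent mean-zero sub-Gaussian terms with $\psi_2$-norms at most $K_\varepsilon\|d_i\|_2$ (Assumption~\ref{ass:A1}). Hoeffding's inequality gives
\[
P\bigl(|d^\top\boldsymbol\varepsilon|/n>t\bigr)\le 2\exp\bigl\{-ct^2n^2/(K_\varepsilon^2\|d\|_2^2)\bigr\}.
\]
The column norms satisfy $\|d\|_2^2/n\le\kappa_R^{-1}$, because each column norm squared equals a diagonal entry of the Gram matrix, which is bounded by its top eigenvalue. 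There are $p+1+s_vq_e$ columns, so a union bound with $t=M\sqrt{\log(p+1+s_vq_e)/n}$ controls the maximum. The hypothesis $\log(p+1+s_vq_e)=O(\log p)$ then yields $O_p(\sqrt{\log p/n})$. If the design is random, we apply the argument conditionally on the design and use that the eigenvalue event holds with probability tending to one.

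\textbf{Remainder bound \eqref{eq:refit_remainder_bound}.} This is the step I expect to be the main obstacle, because the sup-norm bound must hold for $\mathbf r$ paired with every column, including the spline columns $\mathbf Z_S$. First, Cauchy--Schwarz gives $|d^\top\mathbf r|/n\le(\|d\|_2/\sqrt n)(\|\mathbf r\|_2/\sqrt n)$. The first factor is at most $\kappa_R^{-1/2}$, and the second is $O(\sqrt{s_v}\,q_e^{-r})=O(q_e^{-r})$ by Assumption~\ref{ass:A2} together with $s_v=O(1)$. This bound is uniform over columns, needs no union bound, and does not rely on $\mathbf r$ being orthogonal to anything.

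I would present this Cauchy--Schwarz bound as the argument. I would also remark that a sharper route through the uniform bound $|r_{ij}|\le K_g q_e^{-r}\sum_{k\in S}|x_{ijk}|$ is possible but unnecessary. Combining the two displays then gives $\|\hat{\boldsymbol\mu}^{\,R}-\boldsymbol\mu_0\|_\infty\le C_R\{O_p(\sqrt{\log p/n})+O(q_e^{-r})\}=O_p(r_{\mu,n})$.
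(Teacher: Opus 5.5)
Your proposal is correct and follows essentially the same route as the paper's proof. Both use the exact least-squares expansion $B_S(D_S^{\rm all})^\top(\boldsymbol\varepsilon+\mathbf r)/n$ with H\"older and the row-$\ell_1$ bound of Assumption~\ref{ass:A6}, a per-column sub-Gaussian tail plus a union bound over the $p+1+s_vq_e$ columns via $\|D_\ell\|_2^2/n\le\kappa_R^{-1}$, and Cauchy--Schwarz with $\|\mathbf r\|_2^2/n=O(s_vq_e^{-2r})$ and $s_v=O(1)$ for the remainder.
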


Lemma~\ref{lem:classification_refit_rate} isolates the additional conditions needed to separate zero and constant effects. Theorem~\ref{thm:T1_rate} controls the penalized varying blocks, but does not by itself imply a coordinatewise rate for the unpenalized classification refit. That rate follows only after combining clustered score concentration, the explicit approximation remainder, and the inverse-Gram row bound in Assumption~\ref{ass:A6}.

\begin{remark}[Scope of the conditions]
The penalty level is chosen to dominate stochastic score, approximation leakage, and first-order smoothing bias. Selection requires this margin, which is stronger than a tuning rule aimed only at mean-squared error. The unpenalized treatment of all $p$ constant effects also explains the restriction $p+s_vq_e=o(n)$. An ultra-high-dimensional extension would require an additional penalty on $\boldsymbol\mu$ and a different proof. The asymptotic statements treat the tuning sequences as deterministic, or as random sequences that satisfy the stated rate conditions with probability tending to one. A separate proof that EBIC selects such sequences is not asserted here.
\end{remark}

\subsection{Main Results}

\begin{theorem}[Estimation and blockwise function error]\label{thm:T1_rate}
Under Assumptions~\ref{ass:A1}--\ref{ass:A4}, every minimizer of
\eqref{eq:objective} satisfies
\begin{align}
\frac{\|\tilde{\mathbf Z}
(\hat{\boldsymbol\theta}-\boldsymbol\theta_0^\ast)\|_2^2}{n}
&=O_p(s_v\lambda_1^2),\label{eq:t1_profile_prediction}\\
\|(\hat{\boldsymbol\theta}-\boldsymbol\theta_0^\ast)_S\|_2^2
&=O_p(s_v\lambda_1^2),\qquad
\|(\hat{\boldsymbol\theta}-\boldsymbol\theta_0^\ast)_{S^c}\|_{2,1}
=O_p(s_v\lambda_1).\label{eq:t1_parameter}
\end{align}
Writing $\gamma_0=(\beta_0,\boldsymbol\mu_0^\top)^\top$ and $\hat\gamma=(\hat\beta_0,\hat{\boldsymbol\mu}^\top)^\top$,
\begin{align}
\frac{\|\bar{\mathbf X}(\hat\gamma-\gamma_0)
+\mathbf Z(\hat{\boldsymbol\theta}
-\boldsymbol\theta_0^\ast)\|_2^2}{n}
&=O_p\left(\frac pn+s_v\lambda_1^2+s_vq_e^{-2r}\right),\notag\\
\|\hat\gamma-\gamma_0\|_2^2+
\|(\hat{\boldsymbol\theta}-\boldsymbol\theta_0^\ast)_S\|_2^2
&=O_p\left(\frac pn+s_v\lambda_1^2+s_vq_e^{-2r}\right).
\label{eq:t1_joint_parameter}
\end{align}
If, in addition, $s_v=O(1)$, then the active KKT equations and \eqref{eq:block_inverse_condition} give
\begin{equation}\label{eq:t1_max}
\max_{k\in S}
\|\hat{\boldsymbol\theta}_k-\boldsymbol\theta_{0k}^\ast\|_2
=O_p(\lambda_1).
\end{equation}
Consequently,
\begin{equation}\label{eq:t1_function}
\max_{k\in S}\|\hat g_k-g_{0k}\|_{L_2}
=O_p(q_e^{-r}+\lambda_1).
\end{equation}
\end{theorem}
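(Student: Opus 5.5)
Profiling out the unpenalized block reduces the problem to a group-Lasso. For fixed $\boldsymbol\theta$, the minimizing $\gamma$ is the least-squares fit on $\bar{\mathbf X}$, which is well defined by the eigenvalue bound in Assumption~\ref{ass:A1}. Substituting it, $\hat{\boldsymbol\theta}$ minimizes
\[
\frac{1}{2n}\|M_X\mathbf y-\tilde{\mathbf Z}\boldsymbol\theta\|_2^2+\sum_k\{\lambda_1\|\boldsymbol\theta_k\|_2+\lambda_2\boldsymbol\theta_k^\top\Omega\boldsymbol\theta_k\}.
\]
Write $\mathbf y=\bar{\mathbf X}\gamma_0+\mathbf Z\boldsymbol\theta_0^\ast+\mathbf r+\boldsymbol\varepsilon$, so that $M_X\mathbf y=\tilde{\mathbf Z}\boldsymbol\theta_0^\ast+M_X(\mathbf r+\boldsymbol\varepsilon)$. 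Put $\Delta=\hat{\boldsymbol\theta}-\boldsymbol\theta_0^\ast$ and compare objective values with the point $\boldsymbol\theta_0^\ast$. Expanding the quadratic roughness term gives $\lambda_2(\hat\theta^\top\Omega\hat\theta-\theta_0^{*\top}\Omega\theta_0^\ast)\ge 2\lambda_2\theta_0^{*\top}\Omega\Delta$, whose magnitude is at most $\delta_n\|\Delta_S\|_{2,1}$. The basic inequality then reads
\[
\frac{\|\tilde{\mathbf Z}\Delta\|_2^2}{2n}\le \Big(\|\tilde{\mathbf Z}^\top\boldsymbol\varepsilon/n\|_{2,\infty}+b_n+\delta_n\Big)\|\Delta\|_{2,1}+\lambda_1(\|\Delta_S\|_{2,1}-\|\Delta_{S^c}\|_{2,1}).
\]

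The stochastic step is to show $\|\tilde{\mathbf Z}^\top\boldsymbol\varepsilon/n\|_{2,\infty}=O_p(\zeta_n)$. For each $k$ and each unit vector $u$ in a $1/2$-net of $\mathbb R^{q_e}$, the quantity $u^\top\tilde{\mathbf Z}_k^\top\boldsymbol\varepsilon=\sum_i\langle a_i,\boldsymbol\varepsilon_i\rangle$ is a sum of independent sub-Gaussian terms, by Assumption~\ref{ass:A1} applied conditionally on the design. Its $\psi_2$-norm is at most $K_\varepsilon\sqrt{nK_Z}$. A union bound over $5^{q_e}p$ net points gives the rate $\sqrt{(q_e+\log p)/n}$. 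Since $\lambda_1=A_n\zeta_n$ with $A_n\to\infty$ and $b_n+\delta_n\le c_0\lambda_1$, with probability tending to one the bracketed term is at most $\lambda_1/2$. Then $\Delta\in\mathcal C(S,3)$, and the compatibility condition \eqref{eq:compatibility}, combined with $\|\Delta_S\|_{2,1}\le\sqrt{s_v}\|\Delta_S\|_2$, yields \eqref{eq:t1_profile_prediction} and the first part of \eqref{eq:t1_parameter}. The cone constraint then gives $\|\Delta_{S^c}\|_{2,1}\le3\sqrt{s_v}\|\Delta_S\|_2=O_p(s_v\lambda_1)$.

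For $\gamma$, use $\hat\gamma-\gamma_0=(\bar{\mathbf X}^\top\bar{\mathbf X})^{-1}\bar{\mathbf X}^\top\{\boldsymbol\varepsilon+\mathbf r-\mathbf Z\Delta\}$. This gives
\[
\bar{\mathbf X}(\hat\gamma-\gamma_0)+\mathbf Z\Delta=P_X(\boldsymbol\varepsilon+\mathbf r)+\tilde{\mathbf Z}\Delta .
\]
The sub-Gaussian bound gives $\|P_X\boldsymbol\varepsilon\|_2^2/n=O_p((p+1)/n)$, via a trace or Hanson--Wright bound on a rank-$(p+1)$ projection. Also $\|P_X\mathbf r\|^2/n\le\|\mathbf r\|^2/n=O(s_vq_e^{-2r})$. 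Adding the profile prediction bound gives the joint prediction rate. Since $\Delta\in\mathcal C(S,3)$, the joint identifiability condition \eqref{eq:joint_identifiability} converts it into \eqref{eq:t1_joint_parameter}.

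For \eqref{eq:t1_max}, the plan is to use the active-block KKT conditions. For $k\in S$ the stationarity equation reads
\[
-\tilde{\mathbf Z}_k^\top(M_X\mathbf y-\tilde{\mathbf Z}\hat{\boldsymbol\theta})/n+2\lambda_2\Omega\hat{\boldsymbol\theta}_k+\lambda_1 z_k=0,\qquad \|z_k\|_2\le1.
\]
Substituting the model and splitting $\tilde{\mathbf Z}\Delta=\tilde{\mathbf Z}_S\Delta_S+\tilde{\mathbf Z}_{S^c}\Delta_{S^c}$ gives
\[
G_S\Delta_S=w_S,\qquad \|w_S\|_{2,\infty}\le \|\tilde{\mathbf Z}^\top\boldsymbol\varepsilon/n\|_{2,\infty}+b_n+\delta_n+\lambda_1+C_H\|\Delta_{S^c}\|_{2,1}.
\]
The leakage term is $O_p(s_v\lambda_1)=O_p(\lambda_1)$ when $s_v=O(1)$. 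This is the step I expect to be the main obstacle: without exact support recovery, inactive blocks leak into the active equations, and only the $C_H$ bound together with the $\ell_{2,1}$ rate, and hence $s_v=O(1)$, keeps this at order $\lambda_1$. Applying \eqref{eq:block_inverse_condition} then gives $\max_{k\in S}\|\Delta_k\|_2=O_p(\lambda_1)$.

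Finally, \eqref{eq:t1_function} follows from the triangle inequality:
\[
\|\hat g_k-g_{0k}\|_{L_2}\le\|\boldsymbol C^\top(\hat{\boldsymbol\theta}_k-\boldsymbol\theta_{0k}^\ast)\|_{L_2}+\|\boldsymbol C^\top\boldsymbol\theta_{0k}^\ast-g_{0k}\|_{L_2}.
\]
The $L_2$-normalization of $\boldsymbol C$ makes the first term equal $\|\Delta_k\|_2$. The sup-norm approximation bound in Assumption~\ref{ass:A2} makes the second term at most $K_gq_e^{-r}$.
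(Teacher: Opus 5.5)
Your proposal is correct and follows essentially the same route as the paper. The common steps are: profiling out $\gamma$; a basic inequality in which the block score, the leakage $b_n$ and the roughness bias $\delta_n$ are dominated by a fraction of $\lambda_1$, which places $\Delta$ in $\mathcal C(S,3)$; compatibility and joint identifiability for the rates; the decomposition $P_X(\boldsymbol\varepsilon+\mathbf r)+\tilde{\mathbf Z}\Delta$ for the joint prediction error; and the active KKT system, with $C_H$-controlled leakage from $\Delta_{S^c}$ and the block-inverse bound, for the maximum rate. The only difference is cosmetic: you bound the combined bracket by $\lambda_1/2$, whereas the paper bounds $h$ and $d_S$ separately by $\lambda_1/4$, and both give the same cone and constants.
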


Theorem~\ref{thm:T1_rate} separates the two sources of function-estimation error: $q_e^{-r}$ is the spline approximation error, whereas $\lambda_1$ is the stochastic regularization error. The blockwise maximum bound is stronger than an aggregate prediction bound and shows that every active varying effect is estimated consistently when both terms vanish. The bound concerns the penalized estimator used for structural selection; the rate of the subsequent smooth refit is given below.

\begin{theorem}[Selection consistency for varying effects]
\label{thm:T2_select}
Assume that Assumptions~\ref{ass:A1}--\ref{ass:A5} hold, including the exact-support strengthening $b_n+\delta_n=o(\lambda_1)$ in Assumption~\ref{ass:A4},
$s_v=O(1)$, and
\[
\liminf_{n\to\infty}
\frac{\min_{k\in S}\|\boldsymbol\theta_{0k}^\ast\|_2}{\lambda_1}
>C_\star
\]
for a fixed $C_\star>C_G$, where $C_G$ is the block-inverse constant in \eqref{eq:block_inverse_condition}.  Then the profiled criterion has a unique solution and
\[
\Pr(\widehat{\mathcal S}_{\mathrm{vary}}=S)\longrightarrow1.
\]
\end{theorem}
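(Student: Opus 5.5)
We prove the theorem by a primal--dual witness argument applied to the profiled criterion. Since $\boldsymbol\mu$ and $\beta_0$ are unpenalized and $\bar{\mathbf X}$ has full column rank, minimizing \eqref{eq:objective} over $\gamma$ for fixed $\boldsymbol\theta$ gives $\hat\gamma(\boldsymbol\theta)=(\bar{\mathbf X}^\top\bar{\mathbf X})^{-1}\bar{\mathbf X}^\top(\mathbf y-\mathbf Z\boldsymbol\theta)$. The profiled criterion is therefore
\[
F(\boldsymbol\theta)=\frac{1}{2n}\|M_X\mathbf y-\tilde{\mathbf Z}\boldsymbol\theta\|_2^2+\sum_k\{\lambda_1\|\boldsymbol\theta_k\|_2+\lambda_2\boldsymbol\theta_k^\top\Omega\boldsymbol\theta_k\}.
\]
Its minimizers coincide with the $\boldsymbol\theta$-parts of minimizers of \eqref{eq:objective}. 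Using $M_X\mathbf y=\tilde{\mathbf Z}\boldsymbol\theta_0^\ast+M_X\mathbf r+M_X\boldsymbol\varepsilon$, the KKT conditions read
\[
H(\boldsymbol\theta-\boldsymbol\theta_0^\ast)-w+2\lambda_2\Omega_p\boldsymbol\theta_0^\ast+\lambda_1 z=0,
\]
where $z_k\in\partial\|\boldsymbol\theta_k\|_2$ and $w=\tilde{\mathbf Z}^\top(M_X\boldsymbol\varepsilon+M_X\mathbf r)/n$.

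\textbf{Restricted problem.} First we solve the problem restricted to $S$, minimizing $F$ with $\boldsymbol\theta_{S^c}=0$. Because $G_S\succ0$, the restricted problem is strictly convex and has a unique solution $\check{\boldsymbol\theta}_S$. Its active KKT equation is
\[
\check{\boldsymbol\theta}_S-\boldsymbol\theta_{0S}^\ast=G_S^{-1}\{w_S-2\lambda_2(\Omega_p\boldsymbol\theta_0^\ast)_S-\lambda_1\check z_S\}.
\]
We need $\|w\|_{2,\infty}=O_p(\zeta_n)+b_n=o_p(\lambda_1)$, which splits into two parts.
\begin{itemize}
\item[(i)] For the stochastic part, use the clustered sub-Gaussian bound of Assumption~\ref{ass:A1}. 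For fixed $k$ and unit $u\in\mathbb R^{q_e}$, the variable $\langle\tilde{\mathbf Z}_ku,\boldsymbol\varepsilon\rangle/n$ has $\psi_2$ norm at most $K_\varepsilon\sqrt{K_Z/n}$, since $M_X$ is symmetric and idempotent. A $1/2$-net over the sphere in $\mathbb R^{q_e}$, with cardinality $5^{q_e}$, combined with a union bound over $p$ blocks gives $\max_k\|\tilde{\mathbf Z}_k^\top\boldsymbol\varepsilon\|_2/n=O_p(\zeta_n)$. Because $A_n\to\infty$, this is $o_p(\lambda_1)$.
\item[(ii)] The remainder part is exactly $b_n$, and $b_n+\delta_n=o(\lambda_1)$ by the strengthened Assumption~\ref{ass:A4}.
\end{itemize}
Condition \eqref{eq:block_inverse_condition} then gives
\[
\max_{k\in S}\|\check{\boldsymbol\theta}_k-\boldsymbol\theta_{0k}^\ast\|_2\le C_G\{\lambda_1+o_p(\lambda_1)\}.
\]
The beta-min condition with $C_\star>C_G$ makes every $\check{\boldsymbol\theta}_k$, $k\in S$, nonzero with probability tending to one. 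Hence $\check z_k=\check{\boldsymbol\theta}_k/\|\check{\boldsymbol\theta}_k\|_2$ and $\|\check z_S\|_{2,\infty}=1$.

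\textbf{Dual feasibility.} Next we set $\check{\boldsymbol\theta}_{S^c}=0$ and define, for $k\notin S$, $\check z_k=\lambda_1^{-1}\{w_k-H_{kS}(\check{\boldsymbol\theta}_S-\boldsymbol\theta_{0S}^\ast)\}$. The term $2\lambda_2\Omega\boldsymbol\theta^\ast_{0k}$ vanishes here since $\boldsymbol\theta_{0k}^\ast=0$ for $k\notin S$. Substituting the restricted solution gives
\[
\lambda_1\check z_k=w_k-H_{kS}G_S^{-1}w_S+H_{kS}G_S^{-1}\{2\lambda_2(\Omega_p\boldsymbol\theta_0^\ast)_S\}+\lambda_1H_{kS}G_S^{-1}\check z_S.
\]
We bound the four terms separately.
\begin{itemize}
\item[(a)] By Assumption~\ref{ass:A5} with $\|\check z_S\|_{2,\infty}\le1$, the last term is at most $(1-\eta)\lambda_1$ in $\ell_2$.
\item[(b)] By \eqref{eq:irrepresentable} after rescaling by $\|w_S\|_{2,\infty}$, the second term is at most $(1-\eta)\|w_S\|_{2,\infty}=o_p(\lambda_1)$.
\item[(c)] The smoothing-bias term is likewise at most $(1-\eta)\delta_n=o(\lambda_1)$.
\item[(d)] Uniformly over $k\notin S$, $\|w_k\|_2=o_p(\lambda_1)$ by the concentration step above.
\end{itemize}
Therefore $\max_{k\notin S}\|\check z_k\|_2\le1-\eta/2<1$ with probability tending to one. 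So $(\check{\boldsymbol\theta},\check z)$ satisfies the full KKT system, and $\check{\boldsymbol\theta}$ is a minimizer of $F$.

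\textbf{Uniqueness.} This is the step that needs the most care, because $F$ need not be strictly convex when $p q_e$ is large. We use the standard argument for group Lasso. The fitted values $\tilde{\mathbf Z}\boldsymbol\theta$ together with $\Omega_p^{1/2}\boldsymbol\theta$ are common to all minimizers by strict convexity of the quadratic part in those images. Consequently the gradient of the smooth part, and hence the dual vector $\check z$, is common to all minimizers. Any minimizer $\boldsymbol\theta'$ must satisfy $\lambda_1\|\boldsymbol\theta'_k\|_2=\lambda_1\langle\check z_k,\boldsymbol\theta'_k\rangle$. Since $\|\check z_k\|_2<1$ for $k\notin S$, this forces $\boldsymbol\theta'_{S^c}=0$. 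On $S$, the problem is strictly convex because $G_S\succ0$, so $\boldsymbol\theta'_S=\check{\boldsymbol\theta}_S$. The profiled minimizer is therefore unique and equal to $\check{\boldsymbol\theta}$. Since every $\check{\boldsymbol\theta}_k$ with $k\in S$ is nonzero and $\check{\boldsymbol\theta}_{S^c}=0$, we obtain $\widehat{\mathcal S}_{\mathrm{vary}}=S$ with probability tending to one.

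The main obstacle is the net-based concentration in step (i), which must hold uniformly over all $p$ blocks. This is where $q_e+\log p=o(n)$ and $A_n\to\infty$ are used. A second delicate point is that $H_{kS}$ for $k\notin S$ is random, and the bounds in (a)--(c) rely on Assumption~\ref{ass:A5} holding with probability tending to one.
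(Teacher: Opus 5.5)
Your proposal is correct and follows essentially the same primal--dual witness argument as the paper: the restricted minimizer on $S$, the block-inverse bound combined with beta-min to keep the active blocks nonzero, and Assumption~\ref{ass:A5} for strict dual feasibility, followed by uniqueness. The only variation is in the uniqueness step. You argue through common fitted values, hence a common gradient and dual vector across minimizers, whereas the paper applies the strict subgradient inequality $\|\boldsymbol\theta'_k\|_2-\langle R_k/\lambda_1,\boldsymbol\theta'_k\rangle>0$ directly; both routes are valid.
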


Theorem~\ref{thm:T2_select} strengthens the estimation guarantees of Theorem~\ref{thm:T1_rate} to exact recovery of the varying-effect set. The beta-min condition keeps each truly varying block above the penalization level, while the diverging factor $A_n$ makes the stochastic block score negligible relative to $\lambda_1$ and the strict dual separation in Assumption~\ref{ass:A5} excludes inactive blocks. The conditions parallel the separation requirements used in group-sparse and high-dimensional varying-coefficient selection \citep{lounici2011oracle,wei2011group}, but are invoked here only for exact support recovery. Classification of the remaining effects as constant or zero is a separate step and is addressed by Corollary~\ref{cor:mu_threshold}.

\begin{corollary}[Rate of the smooth refit]
\label{cor:smooth_refit_rate}
Under the conditions of Theorem~\ref{thm:T2_select}, on the event $\widehat S=S$, let $D_S^{\rm all}=[\bar{\mathbf X},\mathbf Z_S]$ and suppose the eigenvalues of $(D_S^{\rm all})^\top D_S^{\rm all}/n$ are bounded away from zero and infinity. Put $d_F=p+1+s_vq_e$ and assume $d_F=o(n)$.  For the selected roughness parameter, define
\[
\delta_{F,n}=
\max_{k\in S}2\hat\lambda_2
\|\Omega\boldsymbol\theta_{0k}^\ast\|_2.
\]
Then the smooth refit in \eqref{eq:smooth_refit} satisfies
\begin{align}
\|\hat\gamma^{\,F}-\gamma_0\|_2^2+
\|\hat{\boldsymbol\theta}^{\,F}_S-
\boldsymbol\theta_{0S}^\ast\|_2^2
&=O_p\!\left(
\frac{d_F}{n}+s_vq_e^{-2r}+s_v\delta_{F,n}^2
\right),\label{eq:smooth_refit_parameter_rate}\\
\sum_{k\in S}
\|\hat g_k^{\,F}-g_{0k}\|_{L_2}^2
&=O_p\!\left(
\frac{d_F}{n}+s_vq_e^{-2r}+s_v\delta_{F,n}^2
\right).\label{eq:smooth_refit_function_rate}
\end{align}
The same order holds for the in-sample squared prediction error of the smooth refit. The refit removes the first-order group-penalty shrinkage, while its remaining deterministic shrinkage is the explicitly displayed roughness bias. The displayed order is an error bound; consistency additionally requires
\[
\frac{d_F}{n}+s_vq_e^{-2r}+s_v\delta_{F,n}^2\longrightarrow0.
\]
The present theory does not prove that EBIC automatically selects a roughness parameter satisfying this condition.
\end{corollary}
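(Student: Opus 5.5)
On the event $\widehat S=S$, whose probability tends to one by Theorem~\ref{thm:T2_select}, the smooth refit \eqref{eq:smooth_refit} is a generalized ridge regression on the fixed design $D=D_S^{\rm all}=[\bar{\mathbf X},\mathbf Z_S]$ with $d_F$ columns. Write $\phi=(\gamma^\top,\boldsymbol\theta_S^\top)^\top$ and $\phi_0=(\gamma_0^\top,\boldsymbol\theta_{0S}^{\ast\top})^\top$, and let $P=\operatorname{diag}(0_{p+1},I_{s_v}\otimes\Omega)$ be the roughness weight. The normal equations are
\[
(D^\top D/n+2\hat\lambda_2P)\hat\phi^F=D^\top\mathbf y/n .
\]
By the model and Assumption~\ref{ass:A2}, $\mathbf y=D\phi_0+\mathbf r+\boldsymbol\varepsilon$. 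Setting $\hat\Sigma=D^\top D/n$, this gives the error decomposition
\[
\hat\phi^F-\phi_0=(\hat\Sigma+2\hat\lambda_2P)^{-1}\Big\{\frac{D^\top\boldsymbol\varepsilon}{n}+\frac{D^\top\mathbf r}{n}-2\hat\lambda_2P\phi_0\Big\}.
\]
Since $P\succeq0$ and the eigenvalues of $\hat\Sigma$ are at least $\kappa>0$ by hypothesis, we have $\|(\hat\Sigma+2\hat\lambda_2P)^{-1}\|_{\rm op}\le\kappa^{-1}$. It therefore suffices to bound the Euclidean norm of each of the three terms inside the braces. The plan is to bound them in the order stochastic, approximation, bias, and I will work with an arbitrary random $\hat\lambda_2\ge0$ because the operator bound is uniform in it.

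\emph{Stochastic term.} I would not bound the score coordinatewise; instead I would bound the quadratic form directly. Put $Q=D\hat\Sigma^{-1}D^\top/n$, a projection of rank $d_F$. Conditioning on the design and using independence across subjects from Assumption~\ref{ass:A1},
\[
\mathbb E\|\hat\Sigma^{-1/2}D^\top\boldsymbol\varepsilon/n\|_2^2=n^{-1}\mathbb E(\boldsymbol\varepsilon^\top Q\boldsymbol\varepsilon)=n^{-1}\operatorname{tr}(Q\,\mathrm{Cov}(\boldsymbol\varepsilon)).
\]
The sub-Gaussian bound makes each block $\mathrm{Cov}(\boldsymbol\varepsilon_i)\preceq CK_\varepsilon^2I$, and the covariance is block diagonal, so the trace is at most $CK_\varepsilon^2d_F$. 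Markov's inequality then gives $\|D^\top\boldsymbol\varepsilon/n\|_2^2=O_p(d_F/n)$, using the upper eigenvalue bound on $\hat\Sigma$.

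\emph{Approximation term.} Here $\|D^\top\mathbf r/n\|_2\le\|\hat\Sigma\|_{\rm op}^{1/2}\|\mathbf r\|_2/\sqrt n$, since $\|D^\top\mathbf r\|_2/n=\|\hat\Sigma^{1/2}\,\hat\Sigma^{-1/2}D^\top\mathbf r/n\|_2$ and $\|\hat\Sigma^{-1/2}D^\top\mathbf r\|_2\le\sqrt n\|\mathbf r\|_2$ (the matrix $D\hat\Sigma^{-1}D^\top/n$ is a projection). By Assumption~\ref{ass:A2} this term is $O(s_vq_e^{-2r})$ after squaring.

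\emph{Bias term.} The roughness bias is $2\hat\lambda_2P\phi_0$, whose $k$th block is $2\hat\lambda_2\Omega\boldsymbol\theta_{0k}^\ast$. Its squared norm is therefore at most $s_v\delta_{F,n}^2$ by the definition of $\delta_{F,n}$.

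\emph{Combining.} Adding the three bounds gives \eqref{eq:smooth_refit_parameter_rate}. Because $\boldsymbol C$ is normalized, $\hat g_k^F-g_{0k}=\boldsymbol C^\top(\hat{\boldsymbol\theta}_k^F-\boldsymbol\theta^\ast_{0k})+(\boldsymbol C^\top\boldsymbol\theta_{0k}^\ast-g_{0k})$. The first piece has $L_2$ norm exactly $\|\hat{\boldsymbol\theta}_k^F-\boldsymbol\theta_{0k}^\ast\|_2$, and the second is bounded in sup norm by $K_gq_e^{-r}$. Summing over $k\in S$ yields \eqref{eq:smooth_refit_function_rate}. For prediction, $\|D(\hat\phi^F-\phi_0)-\mathbf r\|_2^2/n\le2\|\hat\Sigma\|_{\rm op}\|\hat\phi^F-\phi_0\|_2^2+2\|\mathbf r\|_2^2/n$, which is of the same order.

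\emph{Main obstacle.} The delicate point is the stochastic term when $\widehat S$ and $\hat\lambda_2$ are data dependent. I would handle $\widehat S$ by proving all the bounds for the fixed oracle design $D_S^{\rm all}$ and transferring them to the refit on $\{\widehat S=S\}$, which has probability tending to one. For $\hat\lambda_2$ the key is that the decomposition isolates $\hat\lambda_2$ only inside the inverse, which is uniformly bounded by $\kappa^{-1}$, and inside the displayed bias. Consequently no independence between $\hat\lambda_2$ and $\boldsymbol\varepsilon$ is needed, because the score $D^\top\boldsymbol\varepsilon/n$ does not involve $\hat\lambda_2$. A secondary check is that the sub-Gaussian condition indeed implies $\|\mathrm{Cov}(\boldsymbol\varepsilon_i)\|_{\rm op}\lesssim K_\varepsilon^2$: for every unit vector $a$, $a^\top\mathrm{Cov}(\boldsymbol\varepsilon_i)a=\mathbb E\langle a,\boldsymbol\varepsilon_i\rangle^2\le CK_\varepsilon^2$, and the bound does not depend on $m_0$.
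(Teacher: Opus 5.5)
Your proposal is correct and follows essentially the same route as the paper's proof. It uses the same exact ridge normal-equation identity, the same uniform bound on $\|(\hat\Sigma+2\hat\lambda_2P)^{-1}\|_{\rm op}$ from $P\succeq0$, the same three-term split (score, approximation remainder, roughness bias), and the same conversions to the function and prediction rates. The only difference is that your trace--Markov bound on $\varepsilon^\top Q\varepsilon$ spells out what the paper cites as ``conditional sub-Gaussian concentration''; it is more elementary, using only the within-cluster covariance bound, and mirrors the paper's own argument for $\|P_X\boldsymbol\varepsilon\|_2^2$.
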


Corollary~\ref{cor:smooth_refit_rate} concerns the estimator used to construct coefficient curves and predictions.  Unlike the selection estimator, it no longer contains group-penalty bias, but it can retain roughness bias when $\hat\lambda_2$ is too large.

\begin{corollary}[Classification of constant and zero effects]
\label{cor:mu_threshold}
Under the conditions of Theorem~\ref{thm:T2_select} and Assumption~\ref{ass:A6}, choose a deterministic threshold satisfying
\[
r_{\mu,n}=o(\tau_N),\qquad \tau_N\to0.
\]
If
\[
\min_{k\in\mathcal S_{\mathrm{const}}}|\mu_{0k}|
>2\tau_N,
\]
then the thresholded refit following \eqref{eq:structure_refit} satisfies
\[
\Pr\{\widehat{\mathcal S}_{\mathrm{const}}
=\mathcal S_{\mathrm{const}},\
\widehat{\mathcal S}_{\mathrm{zero}}
=\mathcal S_{\mathrm{zero}}\}\longrightarrow1.
\]
\end{corollary}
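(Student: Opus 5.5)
The plan is to work on the intersection of two high-probability events and then apply a deterministic thresholding argument. Let $\mathcal E_1=\{\widehat{\mathcal S}_{\mathrm{vary}}=S\}$. By Theorem~\ref{thm:T2_select}, $\Pr(\mathcal E_1)\to1$. On $\mathcal E_1$ the candidate set used in \eqref{eq:structure_refit} is $A=S$, so the design is $D_S^{\rm all}=[\bar{\mathbf X},\mathbf Z_S]$. Its Gram matrix has eigenvalues in $[\kappa_R,\kappa_R^{-1}]$ by Assumption~\ref{ass:A6}. Hence the full-rank check passes (with probability tending to one, taking the numerical tolerance below $\kappa_R$), and $\hat{\boldsymbol\mu}^{\,R}$ coincides with the oracle classification refit of Lemma~\ref{lem:classification_refit_rate}. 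Lemma~\ref{lem:classification_refit_rate} applies because the conditions of Theorem~\ref{thm:T2_select} include $s_v=O(1)$. Since $p\to\infty$ or at least $p\ge1$, the logarithmic condition $\log(p+1+s_vq_e)=O(\log p)$ is part of what is inherited; I will state it explicitly as carried over from the lemma's hypotheses. The lemma then gives $\|\hat{\boldsymbol\mu}^{\,R}-\boldsymbol\mu_0\|_\infty=O_p(r_{\mu,n})$.

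The step to handle carefully is the replacement of the data-dependent refit by the oracle one. Write the event as $\{\widehat{\mathcal S}=S\}\cap\{\text{refit on }S\text{ is within }M r_{\mu,n}\}$. Here the oracle refit is a fixed function of the data and does not depend on the selection. So for any $\epsilon>0$ I would choose $M$ with $\Pr(\|\hat{\boldsymbol\mu}^{\,R}(S)-\boldsymbol\mu_0\|_\infty>Mr_{\mu,n})<\epsilon$ for large $n$. Because $r_{\mu,n}=o(\tau_N)$, eventually $Mr_{\mu,n}<\tau_N/2$, and in particular $<\tau_N$. Let $\mathcal E_2$ denote the event $\|\hat{\boldsymbol\mu}^{\,R}(S)-\boldsymbol\mu_0\|_\infty\le Mr_{\mu,n}$.

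On $\mathcal E_1\cap\mathcal E_2$ the argument is deterministic, with one case for each type of non-varying index.
\begin{itemize}
\item For $k\in\mathcal S_{\mathrm{zero}}$, $\mu_{0k}=0$, so $|\hat\mu^{\,R}_k|\le Mr_{\mu,n}<\tau_N$ and $k$ is assigned to $\widehat{\mathcal S}_{\mathrm{zero}}$.
\item For $k\in\mathcal S_{\mathrm{const}}$, the beta-min condition gives $|\hat\mu^{\,R}_k|\ge|\mu_{0k}|-Mr_{\mu,n}>2\tau_N-\tau_N/2>\tau_N$, so $k$ is assigned to $\widehat{\mathcal S}_{\mathrm{const}}$.
\end{itemize}
Since on $\mathcal E_1$ the complement of $\widehat{\mathcal S}_{\mathrm{vary}}$ is exactly $\mathcal S_{\mathrm{zero}}\cup\mathcal S_{\mathrm{const}}$, both set equalities hold. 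Then $\Pr\ge1-\Pr(\mathcal E_1^c)-\epsilon-o(1)$, and letting $\epsilon\downarrow0$ gives the claim.

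The main obstacle is that the classification thresholds a statistic computed on a random selected set. Coordinatewise uniformity over all $p$ coordinates is needed, which is why the $\infty$-norm in Lemma~\ref{lem:classification_refit_rate} rather than $\ell_2$ control from Theorem~\ref{thm:T1_rate} is required. The issue is resolved by the conditioning-free device above. The oracle refit on the fixed set $S$ is defined regardless of the selection event, so no post-selection dependence enters. The rank flag is also harmless with probability tending to one.
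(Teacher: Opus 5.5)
Your proof is correct and follows the paper's argument. You restrict to the event $\{\widehat{\mathcal S}_{\mathrm{vary}}=S\}$ from Theorem~\ref{thm:T2_select}, on which the classification refit is the oracle refit on $D_S^{\rm all}$; you invoke the sup-norm rate of Lemma~\ref{lem:classification_refit_rate}; and you threshold deterministically using $r_{\mu,n}=o(\tau_N)$ and the $2\tau_N$ beta-min condition. The $\epsilon$--$M$ bookkeeping, the rank-check remark, and your flag that the lemma's condition $\log(p+1+s_vq_e)=O(\log p)$ is carried over (the paper also uses it implicitly) make the paper's terser proof more precise without changing it.
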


Corollary~\ref{cor:mu_threshold} supplies the second stage in classifying zero, constant, and varying effects. It is conditional on recovery of the varying set and requires an identifiable full-rank refit; it does not justify thresholding an arbitrary generalized-inverse representative under rank deficiency.

\begin{theorem}[One-step debiasing and cluster-robust oracle inference]
\label{thm:T3_oracle}
Assume that the conditions of Theorem~\ref{thm:T2_select} and Corollary~\ref{cor:mu_threshold} hold. Let $S_c=\mathcal S_{\mathrm{const}}$ have fixed cardinality, $\mathbf X_c=\mathbf X_{S_c}$, $W=[\mathbf1,\mathbf X_S,\mathbf Z_S]$, $d_n=\dim(W)=1+s_v(1+q_e)$, and $\tilde{\mathbf X}_c=M_W\mathbf X_c$. Partition $\tilde{\mathbf X}_c$ into subject blocks $\tilde{\mathbf X}_{c,i}$ and let $\Sigma_i=\operatorname{Var}(\boldsymbol\varepsilon_i\mid\mathcal D)$. Suppose, in probability (or deterministically for a fixed design),
\begin{align}
A_{c,n}&=\tilde{\mathbf X}_c^\top\tilde{\mathbf X}_c/n
\longrightarrow\Sigma_c>0,\label{eq:t3_A}\\
\Gamma_{c,n}&=\frac1n\sum_{i=1}^N
\tilde{\mathbf X}_{c,i}^\top\Sigma_i\tilde{\mathbf X}_{c,i}
\longrightarrow\Gamma_c>0,\label{eq:t3_Gamma}
\end{align}
and $d_n/n\to0$.  Assume uniformly bounded conditional fourth moments for the subject errors, $\max_i\|\tilde{\mathbf X}_{c,i}\|_{\rm op}=O(1)$, and the cluster Lindeberg condition: for every $\epsilon>0$, 
\begin{equation}\label{eq:cluster_lindeberg}
\sum_{i=1}^N
\mathbb{E}\!\left[
\|\Gamma_{c,n}^{-1/2}\tilde{\mathbf X}_{c,i}^\top
\boldsymbol\varepsilon_i/\sqrt n\|_2^2
\mathbf1\!\left\{
\|\Gamma_{c,n}^{-1/2}\tilde{\mathbf X}_{c,i}^\top
\boldsymbol\varepsilon_i/\sqrt n\|_2>\epsilon
\right\}
\middle|\mathcal D\right]\to0.
\end{equation}
Define
\[
\hat{\boldsymbol\mu}^{\,or}_c
=A_{c,n}^{-1}\tilde{\mathbf X}_c^\top M_W\mathbf y/n.
\]
Assume
\begin{equation}\label{eq:t3_bias}
\sqrt n\left\|
A_{c,n}^{-1}\tilde{\mathbf X}_c^\top M_W\mathbf r/n
\right\|_2=o_p(1).
\end{equation}
A sufficient, directly interpretable condition for
\eqref{eq:t3_bias} is
\[
\sqrt{n s_v}\,q_e^{-r}\longrightarrow0,
\]
together with the bounded eigenvalues in \eqref{eq:t3_A} and a bounded operator norm for $\tilde{\mathbf X}_c/\sqrt n$. Root-$n$ inference requires this undersmoothing condition, which is stronger than the approximation condition needed for estimation or prediction. For the unconditional distributional statement, define $\hat{\boldsymbol\mu}^{\,DB}_c$ arbitrarily, say as the zero vector, on the complement of the correct structural recovery event. Then, on the event of correct structural recovery,
\begin{equation}\label{eq:db_oracle_equivalence}
\hat{\boldsymbol\mu}^{\,DB}_c
=\hat{\boldsymbol\mu}^{\,PS}_c
=\hat{\boldsymbol\mu}^{\,or}_c.
\end{equation}
Consequently, for every fixed vector $a$,
\[
\sqrt n\,a^\top
(\hat{\boldsymbol\mu}^{\,DB}_c-\boldsymbol\mu_{0,c})
\ \xrightarrow{d}\
N\!\left(0,a^\top\Sigma_c^{-1}\Gamma_c
\Sigma_c^{-1}a\right).
\]
If $\hat{\boldsymbol\varepsilon}_i$ are residuals from the correctly selected oracle refit, then
\[
\hat\Gamma_c=\frac1n\sum_{i=1}^N
(\tilde{\mathbf X}_{c,i}^\top\hat{\boldsymbol\varepsilon}_i)
(\tilde{\mathbf X}_{c,i}^\top\hat{\boldsymbol\varepsilon}_i)^\top
\]
obeys $\hat\Gamma_c-\Gamma_{c,n}=o_p(1)$. The matrix $A_{c,n}^{-1}\hat\Gamma_cA_{c,n}^{-1}/n$ consistently estimates the covariance of the one-step estimator. As with the coefficient estimator, the covariance estimator may be assigned any fixed value off the correct-recovery event without changing this conclusion.
\end{theorem}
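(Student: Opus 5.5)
The argument has three parts: an algebraic identity on the recovery event, a Lindeberg CLT for the oracle refit, and a consistency argument for the sandwich estimator.

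\emph{Step 1: reduction to the oracle refit.} Let $\mathcal E_n=\{\widehat{\mathcal S}_{\mathrm{vary}}=S,\ \widehat{\mathcal S}_{\mathrm{const}}=S_c\}$. Theorem~\ref{thm:T2_select} and Corollary~\ref{cor:mu_threshold} give $\Pr(\mathcal E_n)\to1$. On $\mathcal E_n$ we have $\hat W=W$ and $\tilde{\mathbf X}_{\widehat S_c}=\tilde{\mathbf X}_c$. The cancellation $\tilde{\mathbf X}_c^\top W=0$ in \eqref{eq:one_step_definition} removes the starting values, leaving $A_{c,n}^{-1}\tilde{\mathbf X}_c^\top\mathbf y/n$. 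Since $\tilde{\mathbf X}_c^\top\mathbf y=\tilde{\mathbf X}_c^\top M_W\mathbf y$, this equals $\hat{\boldsymbol\mu}^{\,or}_c$. By Frisch--Waugh--Lovell it also equals $\hat{\boldsymbol\mu}^{\,PS}_c$, which proves \eqref{eq:db_oracle_equivalence}. Because the estimators differ only on $\mathcal E_n^c$, which has vanishing probability, it suffices by Slutsky to prove the distributional claims for $\hat{\boldsymbol\mu}^{\,or}_c$. The same remark covers the arbitrary definition off the event for both the estimator and its variance.

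\emph{Step 2: decomposition and CLT.} On $\mathcal E_n$ the truth is
\[
\mathbf y=\mathbf X_c\boldsymbol\mu_{0,c}+W\eta_0+\mathbf r+\boldsymbol\varepsilon,
\]
where $\eta_0$ collects the intercept, $\boldsymbol\mu_{0,S}$, and $\boldsymbol\theta^\ast_{0S}$, and zero effects drop out. Since $M_WW=0$ and $\tilde{\mathbf X}_c^\top M_W\mathbf X_c=nA_{c,n}$,
\[
\sqrt n(\hat{\boldsymbol\mu}^{\,or}_c-\boldsymbol\mu_{0,c})=A_{c,n}^{-1}\tilde{\mathbf X}_c^\top\boldsymbol\varepsilon/\sqrt n+\sqrt n A_{c,n}^{-1}\tilde{\mathbf X}_c^\top M_W\mathbf r/n .
\]
The second term is $o_p(1)$ by \eqref{eq:t3_bias}. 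For the sufficient condition, Cauchy--Schwarz gives
\[
\sqrt n\,\|\tilde{\mathbf X}_c/\sqrt n\|_{\rm op}\,\|\mathbf r\|_2/\sqrt n=O(\sqrt{n s_v}\,q_e^{-r}),
\]
using Assumption~\ref{ass:A2}, the sup-norm remainder, and bounded covariates.

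For the first term, write $\tilde{\mathbf X}_c^\top\boldsymbol\varepsilon/\sqrt n=\sum_i\xi_i$ with $\xi_i=\tilde{\mathbf X}_{c,i}^\top\boldsymbol\varepsilon_i/\sqrt n$. Conditional on $\mathcal D$ these are independent across subjects by Assumption~\ref{ass:A1}, have mean zero, and have total covariance $\Gamma_{c,n}$. The weights $\tilde{\mathbf X}_{c,i}$ are $\mathcal D$-measurable because $M_W$ depends only on the design, so Lindeberg--Feller with \eqref{eq:cluster_lindeberg} applies via Cramér--Wold. This gives $\Gamma_{c,n}^{-1/2}\sum_i\xi_i\to N(0,I)$ conditionally, hence unconditionally by dominated convergence of characteristic functions. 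Combining this with \eqref{eq:t3_A}, \eqref{eq:t3_Gamma}, and Slutsky yields the stated normal limit for $a^\top(\cdot)$.

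\emph{Step 3: sandwich consistency.} This is the main obstacle. On $\mathcal E_n$ the oracle refit residuals are
\[
\hat{\boldsymbol\varepsilon}=M_{[\mathbf X_c,W]}(\boldsymbol\varepsilon+\mathbf r)=\boldsymbol\varepsilon+\mathbf r-[\mathbf X_c,W]\hat\Delta .
\]
Split $\hat\Gamma_c-\Gamma_{c,n}$ into three parts.

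First, the pure-error part is
\[
n^{-1}\sum_i\tilde{\mathbf X}_{c,i}^\top(\boldsymbol\varepsilon_i\boldsymbol\varepsilon_i^\top-\Sigma_i)\tilde{\mathbf X}_{c,i}.
\]
Its entries have variance of order $n^{-2}\sum_i O(1)=O(1/n)$ by bounded fourth moments and $\max_i\|\tilde{\mathbf X}_{c,i}\|_{\rm op}=O(1)$. Chebyshev then makes it $o_p(1)$.

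Second, the cross and quadratic terms involve $\delta_i=\hat{\boldsymbol\varepsilon}_i-\boldsymbol\varepsilon_i$. They are bounded by $C\,n^{-1}\sum_i\|\delta_i\|_2^2$ and $C\,(n^{-1}\sum_i\|\boldsymbol\varepsilon_i\|^2)^{1/2}(n^{-1}\sum_i\|\delta_i\|^2)^{1/2}$. Here $\|\boldsymbol\delta\|_2^2\le 2\|\mathbf r\|_2^2+2\|P_{[\mathbf X_c,W]}(\boldsymbol\varepsilon+\mathbf r)\|_2^2$. The projection term has expectation $O(d_n)$ for the error, using a $\psi_2$ bound on a rank-$(d_n+|S_c|)$ projection, and is at most $\|\mathbf r\|^2$ for the remainder.

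Third, $n^{-1}\sum\|\delta_i\|^2=O_p(d_n/n+s_vq_e^{-2r})=o_p(1)$, and the first factor is $O_p(1)$, so the cross term vanishes as well.

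The delicate point is that $W$ grows with $q_e$. It must be checked that the projection's trace scales as $d_n$ uniformly under arbitrary within-cluster dependence, which the $\psi_2$ cluster bound supplies. Finally, $A_{c,n}^{-1}\hat\Gamma_cA_{c,n}^{-1}-A_{c,n}^{-1}\Gamma_{c,n}A_{c,n}^{-1}=o_p(1)$ follows by continuity of inversion and \eqref{eq:t3_A}. Dividing by $n$ gives the covariance statement.
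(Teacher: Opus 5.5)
Your proposal is correct and follows essentially the same route as the paper's proof. It has the same three pieces: exact cancellation of the starting values plus Frisch--Waugh--Lovell on the recovery event, the expansion into a Lindeberg--Feller score term plus a remainder term controlled by $\|\mathbf r\|_2$, and sandwich consistency via a Chebyshev law of large numbers for the pure-error part plus control of the residual perturbation $-P_{or}\boldsymbol\varepsilon+(I-P_{or})\mathbf r$ through the rank bound and $\|\mathbf r\|_2^2/n=o(1)$. The differences are cosmetic: you bound the cross term using $n^{-1}\sum_i\|\boldsymbol\varepsilon_i\|_2^2=O_p(1)$ rather than $n^{-1}\sum_i\|u_i\|_2^2=O_p(1)$, and you make explicit the passage from the conditional to the unconditional CLT via characteristic functions.
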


Theorem~\ref{thm:T3_oracle} provides oracle inference for fixed-dimensional constant-effect contrasts after correct structural recovery. Equation~\eqref{eq:db_oracle_equivalence} is the bias-correction statement: orthogonalization removes dependence on the penalized nuisance starting value exactly, not merely asymptotically. The remaining first-order bias is the spline approximation term in \eqref{eq:t3_bias}. A spline dimension selected only for prediction need not make this term negligible at the root-$n$ scale, so inferential use requires undersmoothing or an additional approximation-bias correction. Different smoothing rates for estimation and inference also arise in longitudinal and high-dimensional VCM inference \citep{hu2021robust,dai2021vcqr}. The conclusion does not cover misspecified selection events or simultaneous confidence bands for the varying coefficient functions.

\section{Numerical Study}\label{sec:5}

\subsection{Data Design}

We evaluate TV-Select under the longitudinal varying-coefficient model
\[
y_{ij}
=
\beta_0+\sum_{k=1}^{p}x_{ijk}\{\mu_{0k}+g_{0k}(t_{ij})\}
+\varepsilon_{ij},
\qquad t_{ij}\in[0,1].
\]
We consider two sample-size designs: $(N,p)=(100,50)$ and $(N,p)=(200,100)$. Both use $n_i=8$ observations per subject, $q=8$ raw cubic B-spline basis functions, $\sigma=1$, and $R=200$ Monte Carlo replications. After centering and normalization, the effective basis dimension is $q_e=7$. Prediction is evaluated on an independently generated test sample of $N_{\mathrm{test}}=500$ subjects. Tables contain results for both designs; all simulation figures use $(N,p)=(200,100)$ unless explicitly stated otherwise. Because the baseline covariates are repeated within subject, the rank of $\bar{\mathbf X}$ in Scenarios A--D and F is at most $N$. The implementation enforces $p<N$ in these scenarios, a finite-sample rank condition that is stronger than merely requiring $p<n_\bullet$.

We set $\beta_0=0$.  In each scenario, six covariates have time-varying effects, six have nonzero constant effects, and the remaining $p-12$ are inactive.  For each combination of scenario and $p$, a single support permutation is generated from the prespecified seed 2026 before any data or tuning parameters are generated; its first six positions define $S_{\mathrm{vary}}$ and its next six define $S_{\mathrm{const}}$. This support is held fixed across replications and is never supplied to a fitting or tuning routine. The six centered varying effects are proportional to
\[
\sin(2\pi t),\quad \cos(2\pi t),\quad
\sin(4\pi t),\quad \cos(4\pi t),\quad
16t^2(1-t)^2-\frac{8}{15},\quad
\sin(\pi t)-\frac{2}{\pi},
\]
and each shape is normalized to unit $L_2[0,1]$ norm before multiplication by its scenario-specific amplitude. The six nonzero constant coefficients are split evenly between $1$ and $-1$. Baseline covariates follow a mean-zero Gaussian AR(1) design with $\operatorname{Cov}(x_{ik},x_{i\ell})=\rho^{|k-\ell|}$ and are repeated over visits within subject.  In Scenario E,
\[
x_{ijk}=x_{ik}+\delta_{ijk},\qquad
\delta_{ijk}\stackrel{\mathrm{iid}}{\sim}N(0,\sigma_x^2),
\]
independently of the baseline covariates, times, and errors.

The scale convention is fixed by the data-generating mechanism rather than estimated from each replication.  Baseline Gaussian covariates have unit marginal variance and are not empirically restandardized. In Scenario E, $\operatorname{Var}(x_{ijk})=1+\sigma_x^2=1.25$; this increase is part of the visit-level perturbation and is not removed by rescaling. The interaction blocks $\mathbf Z_k$ are not normalized separately. Instead, all methods use the same centered, full-rank, $L_2$-normalized effective spline basis, the same covariate matrices, and the same roughness operator. The independent test data are generated under the same scenario and evaluated with the training-defined analytic basis; no test-specific scaling is performed.

For irregular schedules, the visit times are independent $\mathrm{Unif}(0,1)$ draws sorted within subject.  For regular schedules, $t_{ij}=(j-1)/(n_i-1)$.  Gaussian errors are independent with variance $\sigma^2$, except in Scenario C, where
\[
\boldsymbol\varepsilon_i\sim
N\{\mathbf0,\sigma^2R_\alpha\},\qquad
(R_\alpha)_{j\ell}=\alpha^{|j-\ell|}.
\]
Scenario D uses $\varepsilon_{ij}=\sigma T_{ij}/\sqrt3$ with $T_{ij}\stackrel{\mathrm{iid}}{\sim}t_3$, so $\operatorname{Var}(\varepsilon_{ij})=\sigma^2$. Table~\ref{tab:sim_design} summarizes the six scenarios.

\begin{table}[H]
\centering
\caption{Simulation scenarios.}
\label{tab:sim_design}
\small
\setlength{\tabcolsep}{3pt}
\renewcommand{\arraystretch}{1.12}
\begin{tabular}{cllll}
\toprule
Scenario & Observation times & Covariates & Errors & Signal amplitudes $(a_1,\ldots,a_6)$\\
\midrule
A & Irregular & Baseline, $\rho=0.3$ & Gaussian
  & $(1,1,1,1,1,1)$\\
B & Irregular & Baseline, $\rho=0.6$ & Gaussian
  & $(1,.90,.75,.60,.45,.30)$\\
C & Regular & Baseline, $\rho=0.3$ & AR(1), $\alpha=0.6$
  & $(1,.90,.75,.60,.45,.30)$\\
D & Irregular & Baseline, $\rho=0.3$ & Standardized $t_3$
  & $(1,.85,.70,.55,.40,.25)$\\
E & Irregular & Time-varying, $\sigma_x=0.5$ & Gaussian
  & $(1,.85,.70,.55,.40,.25)$\\
F & Irregular & Baseline, $\rho=0.3$ & Gaussian
  & $(.65,.55,.45,.35,.25,.18)$\\
\bottomrule
\end{tabular}
\end{table}

Scenario D deliberately violates both the conditional sub-Gaussian condition in Assumption~\ref{ass:A1} and the bounded-fourth-moment condition used for Theorem~\ref{thm:T3_oracle}. It is treated as a robustness stress test rather than a numerical verification of the stated theory. We additionally examine sensitivity to the spline dimension $q\in\{8,10,12\}$ in Scenarios A and F at $(N,p)=(200,100)$. 

\subsection{Competing Methods}

We compare TV-Select with three procedures that distinguish the roles of structural selection, roughness regularization, and marginal screening.
\begin{itemize}
\item \textbf{TV-Select (proposed).}  TV-Select combines a block penalty for identifying time-varying effects with a roughness penalty for controlling their curvature.  The penalized fit determines the varying set, after which a smooth refit yields coefficient estimates and predictions.

\item \textbf{Group-Lasso.} Group-Lasso retains the spline model and group penalty \citep{yuan2006model} used by TV-Select but sets the roughness penalty to zero, thereby isolating the contribution of smoothness regularization.

\item \textbf{Marginal-VC.} Marginal-VC ranks predictors by the marginal improvement from a constant to a smooth varying-coefficient fit, reflecting marginal spline comparisons used in VCM selection \citep{wang2008variable,tang2013variable}. It jointly refits the top-ranked blocks while retaining all $p$ constant-effect columns. EBIC selects both the screening size, from 1 to 10, and the roughness parameter from the TV-Select grid.

\item \textbf{VC-Ridge.} VC-Ridge removes the block penalty and represents all $p$ effects as time-varying, with P-spline roughness regularization \citep{eilers1996flexible} and a small ridge stabilization. Because it does not select a varying set, it enters only the estimation and prediction comparisons.
\end{itemize}

All methods use the same normalized effective cubic B-spline basis, the same training and test samples, and the same preprocessing.  The roughness grid is $\{10^{-3},10^{-2},10^{-1}\}$.  For TV-Select and Group-Lasso, the $\lambda_1$ path is generated from the data, beginning at the data-derived $\lambda_{\max}$ that sets every varying block to zero and decreasing to $0.03\lambda_{\max}$ on a logarithmic grid; the true support is never used for tuning. Method-specific tuning parameters are selected by
\[
\operatorname{EBIC}
=
\log\!\left(\frac{\operatorname{RSS}}{n_\bullet}\right)
+\frac{\log(n_\bullet)}{n_\bullet}\widehat{\operatorname{df}}
+\frac{2\gamma}{n_\bullet}
 \log\binom{p}{|\widehat{S}_{\mathrm{vary}}|},
\qquad \gamma=0.5,
\]
where $n_\bullet=\sum_i n_i$, $\widehat{\operatorname{df}}$ is the effective model dimension, and the final term accounts for model-space multiplicity at the selected varying-set size.  Specifically, for the selection-capable methods, let $\tilde{\mathbf Z}_A=M_X\mathbf Z_A$,  $G_A=\tilde{\mathbf Z}_A^\top\tilde{\mathbf Z}_A/n_\bullet$, and $P_A=2\lambda_2(I_{|A|}\otimes\Omega)$.  The implementation uses
\[
\widehat{\operatorname{df}}(A,\lambda_2)
=\operatorname{rank}(\bar{\mathbf X})
+\operatorname{tr}\{(G_A+P_A)^{-1}G_A\}.
\]
For an unpenalized full-rank refit, this expression reduces to $p+1+q_e|A|$. For the nonselective VC-Ridge fit, storing and inverting the full $pq_e$-dimensional smoother matrix at every Monte Carlo tuning step would be prohibitive; its implementation uses the declared blockwise approximation
\[
\operatorname{rank}(\bar{\mathbf X})
+\sum_{k=1}^p
\operatorname{tr}\{(G_{kk}+2\lambda_2\Omega_R)^{-1}G_{kk}\},
\]
where $G_{kk}$ is the residualized block Gram matrix and $\Omega_R$ is its ridge-augmented roughness matrix. VC-Ridge has no model-space term because it always includes every varying block.

\subsection{Evaluation Metrics}

Let $S_{\mathrm{vary}}$, $S_{\mathrm{const}}$, and $S_{\mathrm{zero}}$ denote the true structural partition, and let $\widehat S_{\mathrm{vary}}$ be the varying set selected by a method.  We use the following measures.
\begin{itemize}
\item \textbf{Varying-set recovery.}  Sensitivity and false discovery of
time variation are summarized by
\[
\operatorname{TPR}_{\mathrm{vary}}
=\frac{|\widehat S_{\mathrm{vary}}\cap S_{\mathrm{vary}}|}
{|S_{\mathrm{vary}}|},
\qquad
\operatorname{FPR}_{\mathrm{vary}}
=\frac{|\widehat S_{\mathrm{vary}}\cap S_{\mathrm{vary}}^c|}
{|S_{\mathrm{vary}}^c|}.
\]
Exact recovery is evaluated by $\operatorname{Exact}_{\mathrm{vary}}=\mathbf{1}\{\widehat S_{\mathrm{vary}}=S_{\mathrm{vary}}\}$. Larger TPR and exact-recovery probability are preferable, whereas a smaller FPR is preferable. Selection reproducibility is measured by the average pairwise Jaccard index across the $R$ replications,
\[
\operatorname{Stab}
=\frac{2}{R(R-1)}\sum_{1\le r<s\le R}
\frac{|\widehat S_{\mathrm{vary}}^{(r)}\cap
        \widehat S_{\mathrm{vary}}^{(s)}|}
     {|\widehat S_{\mathrm{vary}}^{(r)}\cup
        \widehat S_{\mathrm{vary}}^{(s)}|},
\]
where a value closer to one indicates more reproducible selection.

\item \textbf{Three-class recovery.}  For every method capable of selecting a varying set, the non-varying coordinates are classified by the common, truth-independent threshold
\[
\tau_n=\{\log(n_\bullet)\}^{1/4}
\sqrt{\frac{\log p}{n_\bullet}}.
\]
Writing $c_k\in\{\mathrm{zero},\mathrm{const},\mathrm{vary}\}$ and $\hat c_k$ for the true and estimated labels, respectively, we record
\[
\operatorname{ClassAcc}
=\frac1p\sum_{k=1}^p\mathbf1(\hat c_k=c_k),\qquad
\operatorname{Exact}_3
=\mathbf1\{(\widehat S_{\mathrm{zero}},
\widehat S_{\mathrm{const}},\widehat S_{\mathrm{vary}})
=(S_{\mathrm{zero}},S_{\mathrm{const}},S_{\mathrm{vary}})\}.
\]
The same threshold rule is used for all methods and is not calibrated to the true coefficient magnitudes. TV-Select uses the identifiable classification refit in \eqref{eq:structure_refit}; if that refit is rank deficient, the replication is flagged rather than classified.

\item \textbf{Active-effect estimation.}  For the nonzero constant effects, we compute
\[
\operatorname{MSE}_{\mu,\mathrm{act}}
=\frac{1}{|S_{\mathrm{const}}|}
 \sum_{k\in S_{\mathrm{const}}}(\widehat\mu_k-\mu_{0k})^2.
\]
Let $S_{\mathrm{act}}=S_{\mathrm{vary}}\cup S_{\mathrm{const}}$ and $\widehat\beta_k(t)=\widehat\mu_k+\boldsymbol C(t)^\top\widehat{\boldsymbol\theta}_k$.  On an equally spaced grid $\{t_g\}_{g=1}^{G}$ with $G=200$, active-function error is
\[
\operatorname{ISE}_{\mathrm{act}}
=\frac{1}{|S_{\mathrm{act}}|}
 \sum_{k\in S_{\mathrm{act}}}\frac{1}{G}
 \sum_{g=1}^{G}
 \{\widehat\beta_k(t_g)-\beta_{0k}(t_g)\}^2.
\]
Both measures are nonnegative and smaller values indicate more accurate estimation. Table~\ref{tab:sim_estimation} multiplies them by $10^3$ for readability.

\item \textbf{Curvature estimation.}  Curvature recovery is evaluated by the roughness error
\[
\operatorname{RE}
=\frac{1}{|S_{\mathrm{vary}}|}
\sum_{k\in S_{\mathrm{vary}}}
\int_0^1\{\widehat g_k''(t)-g_{0k}''(t)\}^2\,dt.
\]
Smaller values indicate more accurate recovery of the second derivatives.

\item \textbf{Prediction.}  An independent test sample with $N_{\mathrm{test}}=500$ subjects is generated from the same scenario.  Prediction accuracy is measured by
\[
\operatorname{MSPE}
=\frac{1}{n_{\mathrm{test}}}
 \sum_{\ell=1}^{n_{\mathrm{test}}}
 (y^{\mathrm{test}}_\ell-\widehat y^{\mathrm{test}}_\ell)^2,
\]
where smaller values indicate better out-of-sample prediction.
\end{itemize}

Except for stability, which is computed directly from all $R$ selected sets, table entries are Monte Carlo means with standard errors in parentheses, estimated as the sample standard deviation divided by $\sqrt{R}$.

\subsection{Main Results}

Table~\ref{tab:sim_selection} gives the numerical results for both sample-size designs, and Figure~\ref{fig:sim_exact} compares exact-recovery probabilities for $(N,p)=(200,100)$. In the figure, the bars are Monte Carlo proportions and the error bars are 95\% Wilson binomial intervals, which remain nondegenerate when all replications succeed. VC-Ridge is not included because, by construction, it retains all $p$ varying blocks and therefore does not perform structural selection.

\begin{sidewaystable}[p]
\centering
\caption{Structural recovery under two sample-size designs.}
\label{tab:sim_selection}
\renewcommand{\arraystretch}{1.08}
\resizebox{\textwidth}{!}{
\begin{tabular}{clccccc@{\hspace{6pt}}ccccc}
\toprule
& & \multicolumn{5}{c}{$N=100,\ p=50$} & \multicolumn{5}{c}{$N=200,\ p=100$}\\
\cmidrule(lr){3-7} \cmidrule(lr){8-12}
Scenario & Method & TPR & FPR & Exact & ClassAcc & Stability & TPR & FPR & Exact & ClassAcc & Stability\\
\midrule
\multirow{3}{*}{A} & TV-Select & 1.000 (0.000) & 0.0001 (0.0001) & 0.995 (0.005) & 0.9582 (0.0024) & 0.999 & 1.000 (0.000) & 0.0000 (0.0000) & 1.000 (0.000) & 0.9781 (0.0014) & 1.000
\\
 & Group-Lasso & 1.000 (0.000) & 0.0233 (0.0016) & 0.380 (0.034) & 0.9370 (0.0029) & 0.774 & 1.000 (0.000) & 0.0051 (0.0006) & 0.720 (0.032) & 0.9647 (0.0016) & 0.887
\\
 & Marginal-VC & 0.996 (0.002) & 0.0060 (0.0011) & 0.800 (0.028) & 0.9505 (0.0027) & 0.927 & 1.000 (0.000) & 0.0001 (0.0001) & 0.995 (0.005) & 0.9780 (0.0014) & 0.999
\\
\addlinespace[2pt]
\multirow{3}{*}{B} & TV-Select & 0.937 (0.006) & 0.0052 (0.0008) & 0.510 (0.035) & 0.8798 (0.0039) & 0.855 & 0.996 (0.002) & 0.0006 (0.0002) & 0.915 (0.020) & 0.9135 (0.0028) & 0.975
\\
 & Group-Lasso & 0.975 (0.004) & 0.0294 (0.0021) & 0.255 (0.031) & 0.8657 (0.0039) & 0.705 & 1.000 (0.000) & 0.0086 (0.0007) & 0.470 (0.035) & 0.8940 (0.0027) & 0.816
\\
 & Marginal-VC & 0.662 (0.009) & 0.0417 (0.0026) & 0.000 (0.000) & 0.7877 (0.0047) & 0.554 & 0.740 (0.007) & 0.0254 (0.0013) & 0.005 (0.005) & 0.8495 (0.0032) & 0.620
\\
\addlinespace[2pt]
\multirow{3}{*}{C} & TV-Select & 1.000 (0.000) & 0.0003 (0.0002) & 0.985 (0.009) & 0.8124 (0.0044) & 0.996 & 1.000 (0.000) & 0.0000 (0.0000) & 1.000 (0.000) & 0.8275 (0.0034) & 1.000
\\
 & Group-Lasso & 1.000 (0.000) & 0.0372 (0.0025) & 0.335 (0.033) & 0.7869 (0.0045) & 0.693 & 1.000 (0.000) & 0.0170 (0.0014) & 0.400 (0.035) & 0.8143 (0.0035) & 0.705
\\
 & Marginal-VC & 0.969 (0.005) & 0.0101 (0.0012) & 0.535 (0.035) & 0.8026 (0.0044) & 0.846 & 0.975 (0.004) & 0.0096 (0.0008) & 0.365 (0.034) & 0.8192 (0.0033) & 0.803
\\
\addlinespace[2pt]
\multirow{3}{*}{D} & TV-Select & 0.915 (0.008) & 0.0008 (0.0003) & 0.495 (0.035) & 0.9469 (0.0039) & 0.896 & 0.994 (0.002) & 0.0005 (0.0002) & 0.925 (0.019) & 0.9755 (0.0020) & 0.976
\\
 & Group-Lasso & 0.960 (0.007) & 0.0172 (0.0016) & 0.385 (0.034) & 0.9398 (0.0039) & 0.771 & 0.998 (0.001) & 0.0058 (0.0007) & 0.670 (0.033) & 0.9643 (0.0022) & 0.868
\\
 & Marginal-VC & 0.752 (0.009) & 0.0132 (0.0014) & 0.015 (0.009) & 0.9071 (0.0041) & 0.696 & 0.843 (0.007) & 0.0081 (0.0008) & 0.065 (0.017) & 0.9555 (0.0022) & 0.736
\\
\addlinespace[2pt]
\multirow{3}{*}{E} & TV-Select & 0.948 (0.005) & 0.0001 (0.0001) & 0.685 (0.033) & 0.9889 (0.0009) & 0.927 & 0.999 (0.001) & 0.0001 (0.0001) & 0.990 (0.007) & 0.9979 (0.0004) & 0.997
\\
 & Group-Lasso & 0.988 (0.003) & 0.0187 (0.0015) & 0.415 (0.035) & 0.9782 (0.0014) & 0.791 & 0.999 (0.001) & 0.0029 (0.0005) & 0.815 (0.028) & 0.9936 (0.0007) & 0.930
\\
 & Marginal-VC & 0.788 (0.008) & 0.0120 (0.0014) & 0.050 (0.015) & 0.9548 (0.0018) & 0.735 & 0.871 (0.007) & 0.0054 (0.0007) & 0.155 (0.026) & 0.9847 (0.0008) & 0.788
\\
\addlinespace[2pt]
\multirow{3}{*}{F} & TV-Select & 0.756 (0.007) & 0.0008 (0.0003) & 0.020 (0.010) & 0.9236 (0.0027) & 0.866 & 0.932 (0.007) & 0.0002 (0.0001) & 0.625 (0.034) & 0.9738 (0.0013) & 0.902
\\
 & Group-Lasso & 0.805 (0.007) & 0.0103 (0.0012) & 0.050 (0.015) & 0.9233 (0.0029) & 0.777 & 0.974 (0.004) & 0.0027 (0.0005) & 0.680 (0.033) & 0.9661 (0.0015) & 0.894
\\
 & Marginal-VC & 0.603 (0.010) & 0.0082 (0.0010) & 0.005 (0.005) & 0.8928 (0.0032) & 0.661 & 0.733 (0.009) & 0.0043 (0.0005) & 0.035 (0.013) & 0.9543 (0.0015) & 0.725
\\
\bottomrule
\end{tabular}}
\end{sidewaystable}

\begin{figure}[H]
\centering
\includegraphics[width=0.8\textwidth]{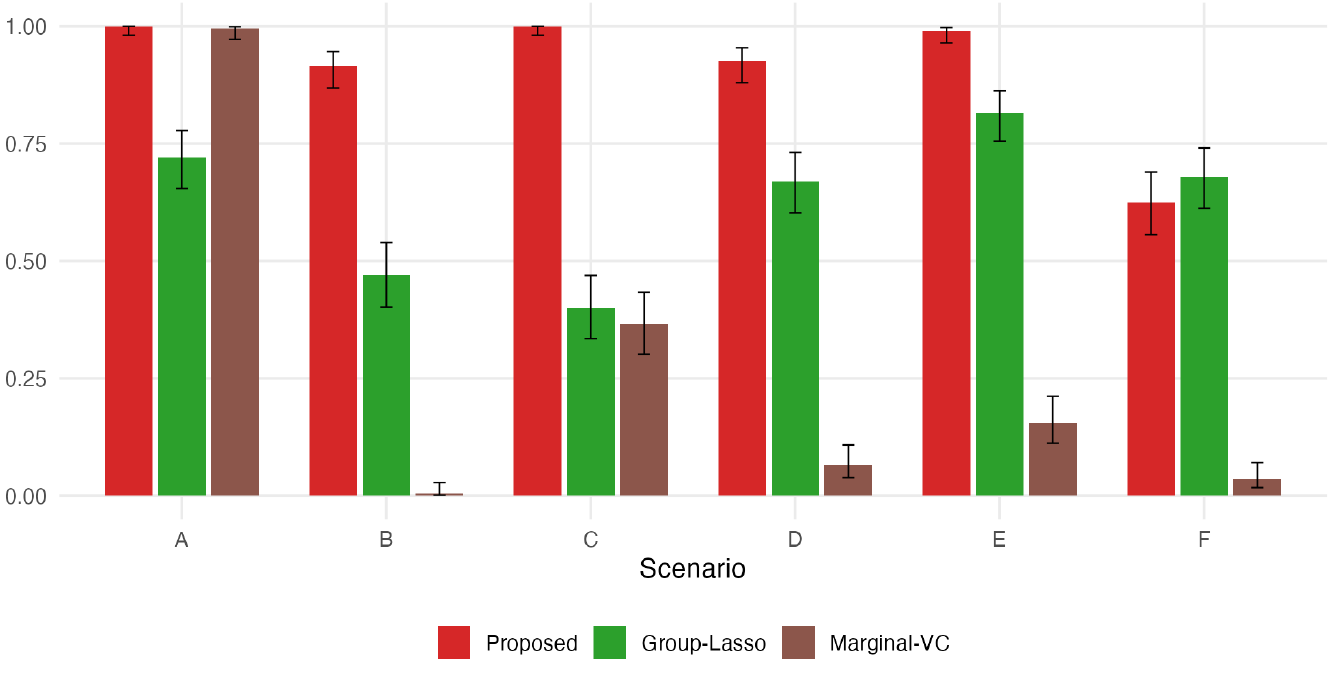}
\caption{Exact recovery of the time-varying set for
$(N,p)=(200,100)$.}
\label{fig:sim_exact}
\end{figure}

For $(N,p)=(200,100)$, TV-Select attains exact recovery with probability $1.000$ in Scenarios A and C, $0.990$ in Scenario E, and above $0.90$ in Scenarios B and D, while its FPR never exceeds $6.4\times10^{-4}$. Group-Lasso has nearly perfect TPR, but its higher FPR reduces exact recovery to $0.400$--$0.815$ in Scenarios B--E.  Scenario F is the only setting in which Group-Lasso has higher exact recovery ($0.680$ versus $0.625$); that gain is accompanied by a larger FPR ($0.0027$ versus $0.0002$) and slightly lower stability.

The smaller design gives the expected finite-sample deterioration without changing the qualitative comparison. For $(N,p)=(100,50)$, TV-Select has the highest exact-recovery probability in Scenarios A--E and the highest ClassAcc and stability in all six scenarios. Its FPR ranges from $0.0001$ to $0.0052$, compared with $0.0103$--$0.0372$ for Group-Lasso across Scenarios A--F. Scenario F exposes the detection boundary: exact recovery is only $0.020$ for TV-Select and $0.050$ for Group-Lasso because the weakest varying effects are frequently missed. Marginal-VC remains competitive in Scenario A but deteriorates under stronger dependence and weak signals. TV-Select retains most active blocks while more strongly suppressing spurious time variation. Because most variables are inactive in both designs, ClassAcc is interpreted together with exact recovery rather than in isolation.

Table~\ref{tab:sim_estimation} summarizes active-effect estimation and out-of-sample prediction for both sample-size designs, and Figure~\ref{fig:sim_errors} compares the methods for $(N,p)=(200,100)$. Bar heights are replication averages, and the error bars are approximate 95\% Monte Carlo confidence intervals. The vertical axes use $\log(1+x)$ only to improve visual separation. VC-Ridge is omitted from these panels because its much larger errors would compress the other bars; its complete results remain in Table~\ref{tab:sim_estimation}.

For $(N,p)=(200,100)$, TV-Select has the smallest active-effect ISE and MSPE in Scenarios B--F. In Scenario A, Marginal-VC has active-effect ISE $0.00663$ versus $0.00683$ and MSPE $1.194$ versus $1.197$. Compared with Group-Lasso, TV-Select reduces active-effect ISE by approximately 62--71\% and MSPE by 9--14\% across the six scenarios. At $(N,p)=(100,50)$, TV-Select has the smallest active-effect ISE and MSPE in every scenario and the smallest active mean-effect MSE except in Scenario C, where the values are nearly equal. The lower Marginal-VC error in Scenario A is not maintained under dependence, non-Gaussian errors, time-varying covariates, or weak signals. VC-Ridge has larger errors in both designs, illustrating the variance cost of treating every effect as time varying.

The RE comparison differs from those based on ISE and MSPE. Group-Lasso has the smallest RE in most scenarios under both designs, whereas TV-Select generally improves ISE and MSPE more strongly. This discrepancy reflects a tradeoff between bias and variance: roughness regularization stabilizes coefficient functions and predictions but can shrink second derivatives toward zero. RE is used as a curvature-recovery diagnostic rather than as a substitute for function-level error or predictive performance.

\begin{table}[!tbp]
\centering
\caption{Active-effect estimation, curvature recovery, and prediction.}
\label{tab:sim_estimation}
\footnotesize
\setlength{\tabcolsep}{3.0pt}
\renewcommand{\arraystretch}{1.08}
\resizebox{\textwidth}{!}{
\begin{tabular}{clrrrr@{\hspace{9pt}}rrrr}
\toprule
& & \multicolumn{4}{c}{$N=100,\ p=50$} & \multicolumn{4}{c}{$N=200,\ p=100$}\\
\cmidrule(lr){3-6} \cmidrule(lr){7-10}
Scenario & Method & $10^3\mathrm{MSE}_{\mu,\mathrm{act}}$ & $10^3\mathrm{ISE}_{\mathrm{act}}$ & $10^{-3}\mathrm{RE}$ & MSPE & $10^3\mathrm{MSE}_{\mu,\mathrm{act}}$ & $10^3\mathrm{ISE}_{\mathrm{act}}$ & $10^{-3}\mathrm{RE}$ & MSPE\\
\midrule
\multirow{4}{*}{A} & TV-Select & 3.429 (0.167) & 11.978 (0.166) & 4.218 (0.078) & 1.243 (0.004) & 1.536 (0.062) & 6.827 (0.072) & 3.400 (0.046) & 1.197 (0.003)
\\
 & Group-Lasso & 4.122 (0.205) & 33.720 (0.701) & 3.514 (0.067) & 1.521 (0.010) & 1.737 (0.069) & 19.176 (0.307) & 2.814 (0.039) & 1.357 (0.005)
\\
 & Marginal-VC & 3.536 (0.186) & 14.509 (1.055) & 5.018 (0.106) & 1.278 (0.014) & 1.535 (0.061) & 6.633 (0.070) & 3.855 (0.053) & 1.194 (0.003)
\\
 & VC-Ridge & 5.792 (0.302) & 46.308 (1.551) & 5.816 (0.262) & 2.747 (0.093) & 2.610 (0.099) & 25.509 (0.299) & 4.449 (0.078) & 2.612 (0.012)
\\
\hline\addlinespace[2pt]
\multirow{4}{*}{B} & TV-Select & 6.078 (0.248) & 16.138 (0.450) & 2.556 (0.057) & 1.257 (0.006) & 2.784 (0.119) & 6.747 (0.133) & 1.951 (0.032) & 1.179 (0.003)
\\
 & Group-Lasso & 7.358 (0.317) & 37.696 (0.859) & 1.671 (0.030) & 1.515 (0.011) & 3.043 (0.132) & 17.555 (0.261) & 1.330 (0.021) & 1.317 (0.004)
\\
 & Marginal-VC & 7.663 (0.333) & 47.478 (1.303) & 3.511 (0.095) & 1.641 (0.017) & 3.194 (0.129) & 27.319 (0.841) & 2.312 (0.046) & 1.438 (0.011)
\\
 & VC-Ridge & 10.221 (0.529) & 89.668 (12.503) & 8.936 (1.563) & 3.067 (0.232) & 4.389 (0.194) & 28.833 (1.244) & 2.734 (0.118) & 2.290 (0.044)
\\
\hline\addlinespace[2pt]
\multirow{4}{*}{C} & TV-Select & 9.644 (0.417) & 15.734 (0.350) & 2.567 (0.031) & 1.447 (0.009) & 4.925 (0.211) & 9.128 (0.160) & 2.440 (0.019) & 1.420 (0.007)
\\
 & Group-Lasso & 9.641 (0.417) & 49.910 (0.951) & 2.537 (0.042) & 1.798 (0.013) & 4.925 (0.211) & 30.935 (0.500) & 1.956 (0.023) & 1.649 (0.009)
\\
 & Marginal-VC & 9.640 (0.417) & 17.895 (0.479) & 2.681 (0.038) & 1.476 (0.010) & 4.925 (0.211) & 10.414 (0.272) & 2.496 (0.023) & 1.441 (0.008)
\\
 & VC-Ridge & 10.605 (0.471) & 25.468 (0.454) & 2.299 (0.029) & 1.958 (0.011) & 5.395 (0.242) & 13.821 (0.207) & 2.142 (0.018) & 1.933 (0.009)
\\
\hline\addlinespace[2pt]
\multirow{4}{*}{D} & TV-Select & 3.551 (0.225) & 13.230 (1.336) & 2.344 (0.064) & 1.238 (0.025) & 1.586 (0.076) & 5.118 (0.127) & 1.741 (0.032) & 1.175 (0.019)
\\
 & Group-Lasso & 3.977 (0.243) & 30.701 (1.461) & 1.557 (0.036) & 1.462 (0.026) & 1.731 (0.085) & 15.106 (0.404) & 1.155 (0.019) & 1.305 (0.019)
\\
 & Marginal-VC & 3.896 (0.252) & 25.345 (1.498) & 2.737 (0.073) & 1.399 (0.026) & 1.671 (0.079) & 11.395 (0.407) & 1.848 (0.038) & 1.260 (0.020)
\\
 & VC-Ridge & 7.176 (0.684) & 53.250 (7.259) & 4.911 (0.610) & 3.165 (0.322) & 2.786 (0.258) & 22.602 (2.478) & 2.835 (0.291) & 2.805 (0.271)
\\
\hline\addlinespace[2pt]
\multirow{4}{*}{E} & TV-Select & 1.673 (0.076) & 8.445 (0.184) & 2.194 (0.048) & 1.198 (0.004) & 0.760 (0.030) & 3.681 (0.053) & 1.659 (0.028) & 1.135 (0.002)
\\
 & Group-Lasso & 1.927 (0.088) & 21.390 (0.517) & 1.442 (0.028) & 1.403 (0.009) & 0.831 (0.033) & 12.550 (0.196) & 1.123 (0.017) & 1.278 (0.004)
\\
 & Marginal-VC & 1.849 (0.082) & 17.993 (0.642) & 2.434 (0.059) & 1.354 (0.010) & 0.823 (0.033) & 8.598 (0.355) & 1.767 (0.034) & 1.215 (0.006)
\\
 & VC-Ridge & 3.183 (0.141) & 27.805 (0.410) & 3.740 (0.086) & 2.453 (0.014) & 1.407 (0.056) & 14.681 (0.184) & 2.652 (0.051) & 2.456 (0.011)
\\
\hline\addlinespace[2pt]
\multirow{4}{*}{F} & TV-Select & 3.231 (0.153) & 14.327 (0.299) & 1.531 (0.047) & 1.271 (0.005) & 1.686 (0.087) & 5.574 (0.144) & 0.963 (0.021) & 1.178 (0.003)
\\
 & Group-Lasso & 3.655 (0.187) & 31.139 (0.635) & 0.880 (0.018) & 1.459 (0.009) & 1.842 (0.094) & 14.594 (0.211) & 0.591 (0.009) & 1.297 (0.004)
\\
 & Marginal-VC & 3.519 (0.156) & 23.289 (0.641) & 1.941 (0.066) & 1.388 (0.009) & 1.788 (0.087) & 11.211 (0.311) & 1.070 (0.027) & 1.255 (0.005)
\\
 & VC-Ridge & 5.211 (0.231) & 35.632 (0.496) & 2.815 (0.067) & 2.474 (0.015) & 2.824 (0.136) & 18.155 (0.227) & 1.653 (0.038) & 2.466 (0.010)
\\
\bottomrule
\end{tabular}}
\end{table}

\begin{figure}[H]
\centering
\begin{subfigure}{0.82\textwidth}
\centering
\includegraphics[width=\linewidth]{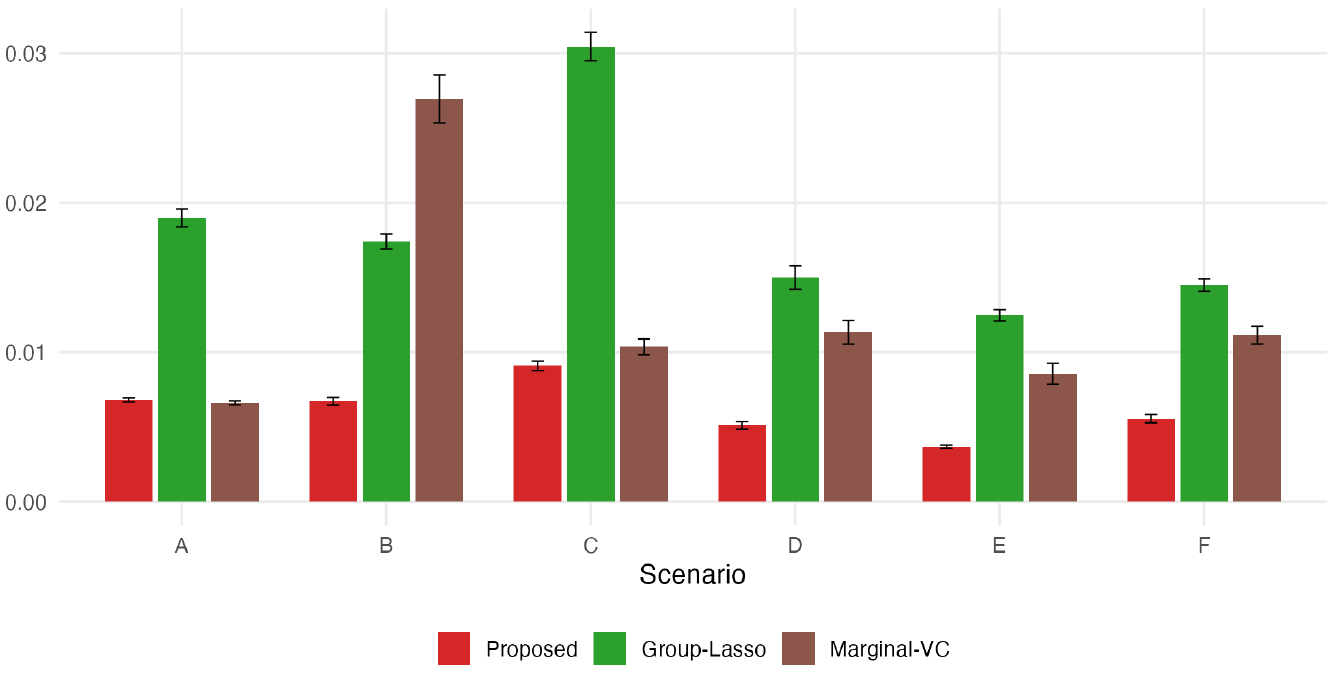}
\caption{Active-effect integrated squared error.}
\label{fig:sim_errors_ise}
\end{subfigure}

\vspace{0.5em}

\begin{subfigure}{0.82\textwidth}
\centering
\includegraphics[width=\linewidth]{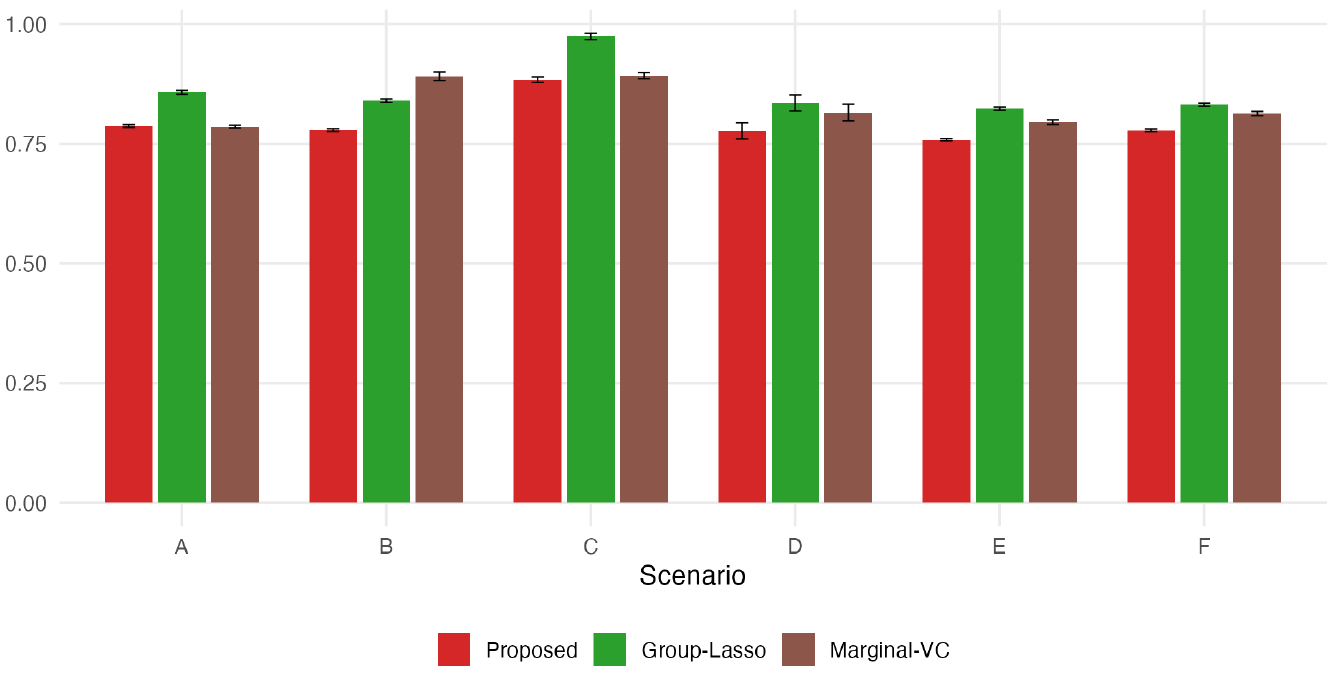}
\caption{Out-of-sample mean squared prediction error.}
\label{fig:sim_errors_mspe}
\end{subfigure}
\caption{Estimation and prediction performance for $(N,p)=(200,100)$.}
\label{fig:sim_errors}
\end{figure}

Figure~\ref{fig:sim_curves} examines function recovery using estimates averaged across all replications rather than a selected best-fit replication. The black curves are the true functions, the red curves are the TV-Select averages, and the shaded regions are the pointwise 10th--90th percentile bands over the $R=200$ replications. In Scenario A, TV-Select closely recovers the low-frequency and unimodal effects; the most visible attenuation occurs for the higher-frequency fourth curve. In Scenario F, the first five shapes remain well recovered, whereas the weakest sixth effect is more strongly shrunk and has a wider band. The latter pattern is consistent with the lower TPR in Scenario F and identifies a finite-sample detection boundary. The averaged estimates retain the principal shapes without the oscillation and variance inflation reflected in the competing methods' ISE values.

\begin{figure}[H]
\centering
\begin{subfigure}{0.92\textwidth}
\centering
\includegraphics[width=\linewidth]{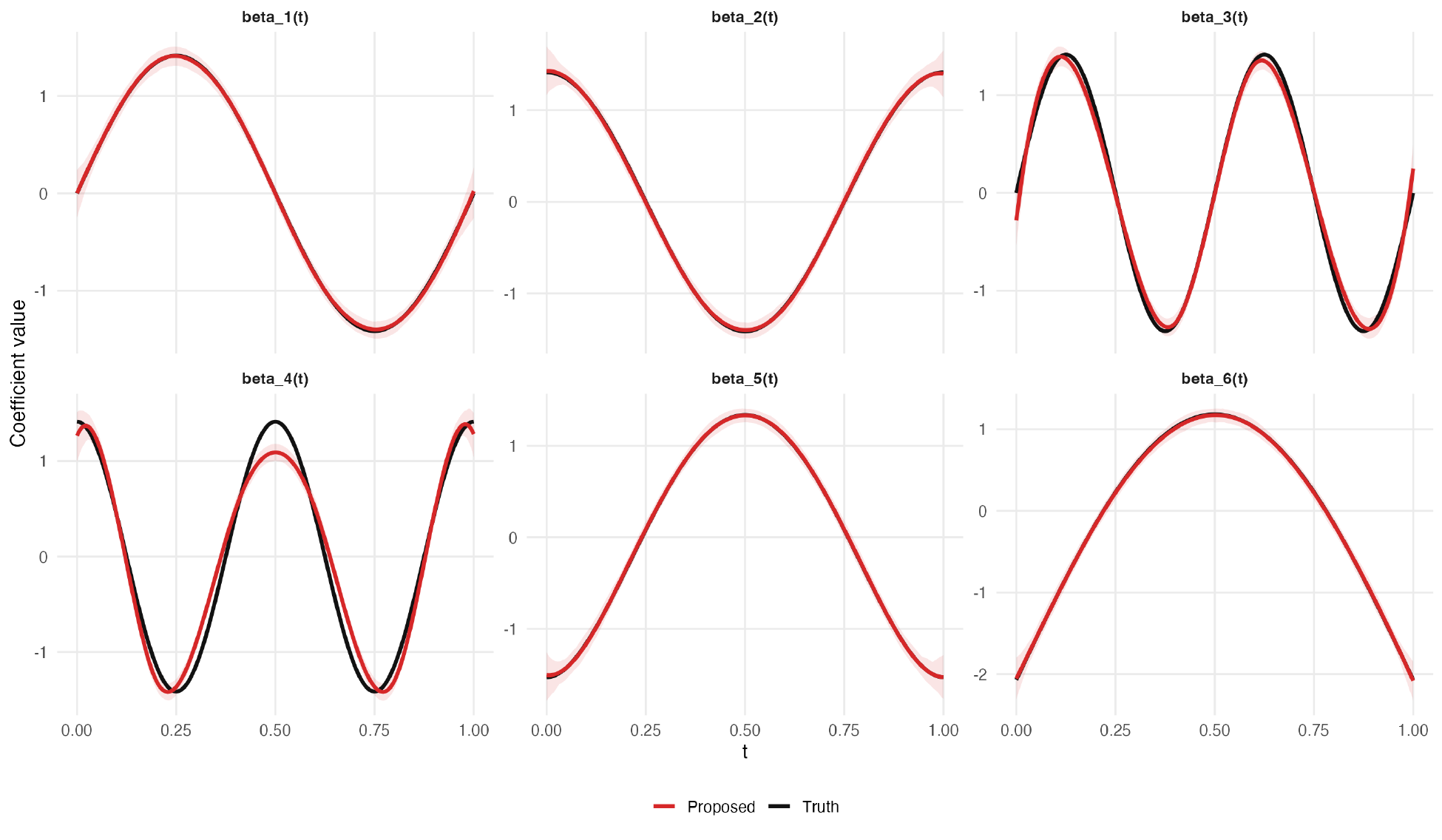}
\caption{Scenario A: baseline signal strength.}
\end{subfigure}

\vspace{0.5em}
\begin{subfigure}{0.92\textwidth}
\centering
\includegraphics[width=\linewidth]{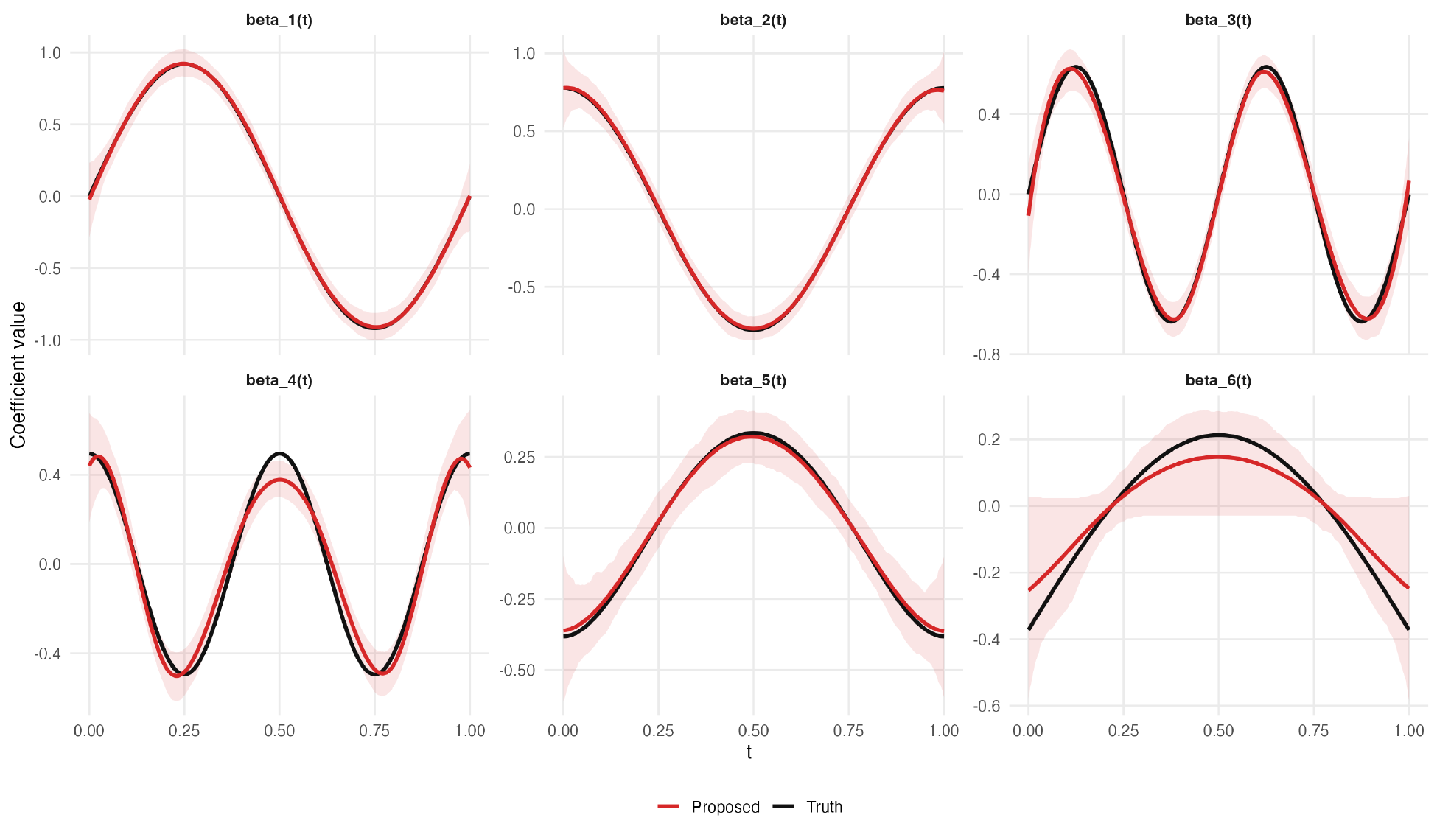}
\caption{Scenario F: decreasing and weak signal strengths.}
\end{subfigure}
\caption{TV-Select estimates averaged over 200 replications for $(N,p)=(200,100)$.}
\label{fig:sim_curves}
\end{figure}

Table~\ref{tab:sim_q_sensitivity} and Figure~\ref{fig:sim_q_sensitivity} assess sensitivity to the raw spline dimension $q$. The $q=8$ row is taken directly from the main $R=200$ experiment and coincides with Tables~\ref{tab:sim_selection} and \ref{tab:sim_estimation}; the $q=10$ and $q=12$ rows use the corresponding $R=100$ additional experiments. The figure displays approximate 95\% Monte Carlo confidence intervals using the replication count for each value of $q$. Active-effect ISE is multiplied by $10^3$. In Scenario A, support recovery is perfect for all three dimensions, while $q=10$ yields the smallest active-effect ISE and MSPE. In the weak-signal Scenario F, increasing $q$ reduces TPR and exact recovery because each varying block contains more coefficients to estimate from the same sample. FPR remains essentially zero. Greater flexibility improves estimation for strong signals but increases detection variability near the weak-signal boundary.

\begin{table}[H]
\centering
\caption{Sensitivity of TV-Select to the raw spline dimension.}
\label{tab:sim_q_sensitivity}
\footnotesize
\setlength{\tabcolsep}{4.2pt}
\begin{tabular}{ccccccc}
\toprule
Scenario & $q$ & TPR & FPR & Exact recovery &
$10^3\mathrm{ISE}_{\mathrm{act}}$ & MSPE\\
\midrule
\multirow{3}{*}{A}
 & 8  & 1.000 (0.000) & 0.0000 (0.0000) & 1.000 (0.000)
 & 6.827 (0.072) & 1.197 (0.003)\\
 & 10 & 1.000 (0.000) & 0.0000 (0.0000) & 1.000 (0.000)
 & 4.900 (0.107) & 1.169 (0.004)\\
 & 12 & 1.000 (0.000) & 0.0000 (0.0000) & 1.000 (0.000)
 & 5.519 (0.112) & 1.177 (0.004)\\
\addlinespace[2pt]
\multirow{3}{*}{F}
 & 8  & 0.933 (0.007) & 0.0002 (0.0001) & 0.625 (0.034)
 & 5.574 (0.144) & 1.178 (0.003)\\
 & 10 & 0.892 (0.011) & 0.0001 (0.0001) & 0.450 (0.050)
 & 6.597 (0.243) & 1.189 (0.005)\\
 & 12 & 0.843 (0.012) & 0.0000 (0.0000) & 0.290 (0.046)
 & 8.259 (0.280) & 1.210 (0.005)\\
\bottomrule
\end{tabular}
\end{table}

\begin{figure}[H]
\centering
\includegraphics[width=0.93\textwidth]{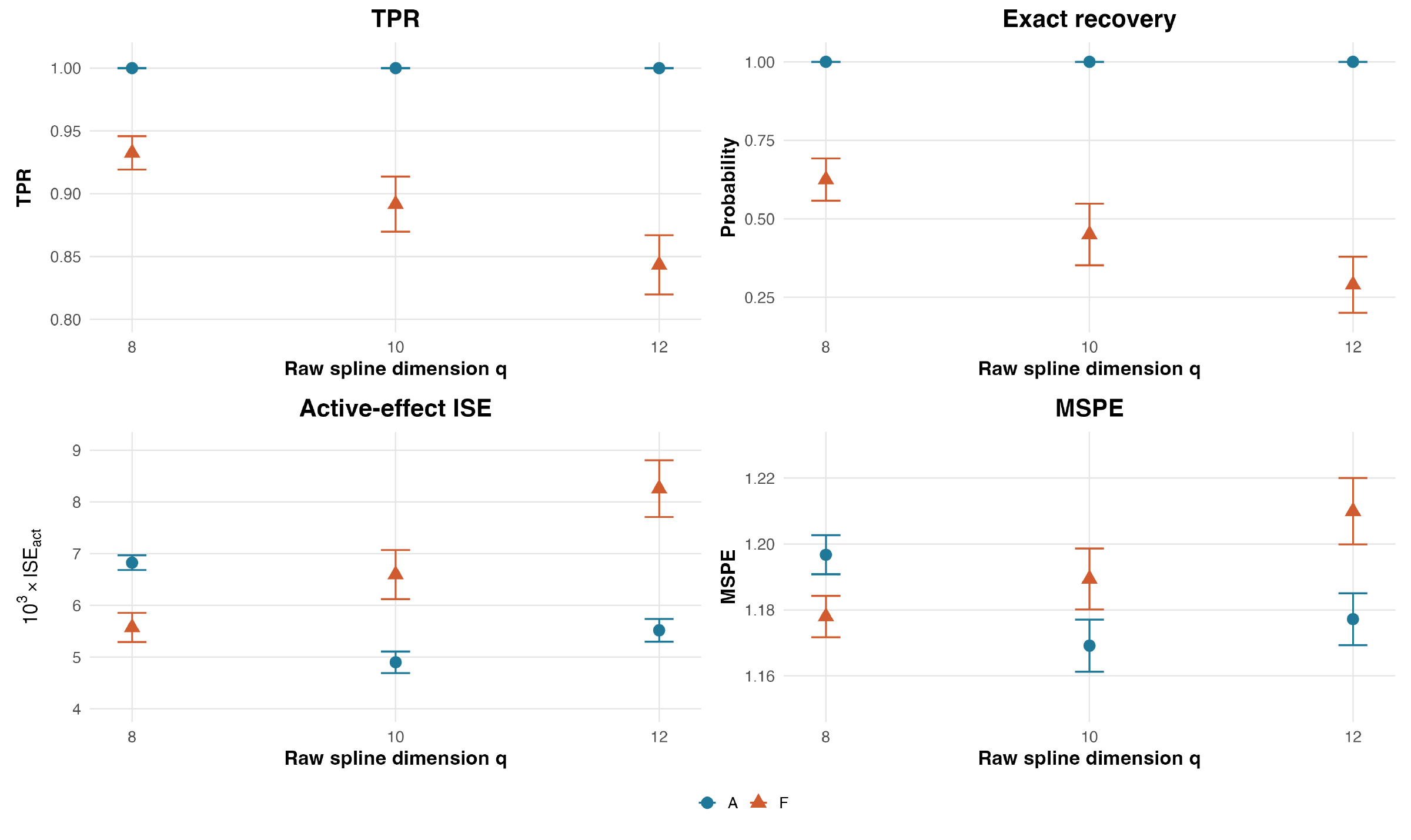}
\caption{Sensitivity to the spline dimension in Scenarios A and F for $(N,p)=(200,100)$.}
\label{fig:sim_q_sensitivity}
\end{figure}

\subsection{Summary}

Across the two sample-size designs and six data-generating scenarios, TV-Select attained the smallest active-effect ISE and MSPE in 11 of the 12 combinations of scenario and design and the highest exact-recovery probability in 10 of 12, while keeping FPR essentially at zero. For Scenarios A--E with $(N,p)=(200,100)$, exact recovery ranged from 0.915 to 1.000, together with high classification accuracy and stability. The structural decomposition removed unnecessary time variation while retaining the principal dynamic effects.

Compared with Group-Lasso, TV-Select retains high TPR while producing fewer false positives, lower active-effect ISE, and lower MSPE. Marginal-VC is less reliable under correlated and weak signals, while VC-Ridge incurs the variance associated with fitting every covariate as time varying. Curves averaged across all replications recover the principal shapes rather than depicting a favorable replication. The same pattern is present for $(N,p)=(100,50)$, and FPR remains controlled for $q=8,10,$ and $12$. Scenario F represents the weak-signal boundary, where Group-Lasso attains greater sensitivity by selecting more blocks. Although Group-Lasso has smaller RE in several settings, TV-Select has lower coefficient-function and prediction errors in most settings. Across these criteria, TV-Select combines false-positive control, support recovery, stable function estimation, and prediction accuracy.

\clearpage
\section{Real Data Analysis}\label{sec:6}

The analysis uses the \texttt{sleep-cassette} subset of the publicly available Sleep-EDF Expanded database on PhysioNet\footnote{\url{https://www.physionet.org/content/sleep-edfx/1.0.0/}} \citep{goldberger2000physiobank}. The polysomnographic recordings and their sleep-related scientific context were originally described by \citet{kemp2000analysis}. Repeated measurements from overnight polysomnography and manually scored sleep stages permit associations between physiological signals and slow-wave activity to vary over the course of sleep.

\subsection{Data and Feature Construction}

The analysis included 153 recordings from 78 subjects (41 women and 37 men; age range 25--101 years).  For each recording, we retained the interval from 30 minutes before the first scored sleep epoch to 30 minutes after the last scored sleep epoch and divided it into consecutive five-minute blocks. A block was retained when at least 80\% of its duration had a valid sleep-stage annotation. Block midpoints were normalized within recording to $t\in[0,1]$, giving 19,489 longitudinal observations.

The response was log-transformed delta-band power (0.5--4 Hz) from the EEG channel \texttt{Fpz-Cz},
\[
Y_{ij}=\log\{1+\operatorname{DeltaPower}_{ij}\},
\]
a continuous measure of slow-wave activity.  Nine block-level physiological features were candidates for time-varying effects: relative power in the theta, alpha, sigma, and beta bands from \texttt{Pz-Oz}; EOG root mean square (RMS) and line length; EMG RMS; and respiratory standard deviation and line length. Sex, age, sleep-stage depth, and within-block wake fraction were included as prespecified constant adjustments. Rectal temperature was retained for quality control but excluded from modeling because its scale was not comparable across recordings. Continuous variables were standardized using training-fold means and standard deviations, which were then applied without modification to the corresponding held-out subjects.

\subsection{Modeling and Evaluation}

For block $j$ from subject $i$, we fitted
\[
Y_{ij}
=
\beta_0+\sum_{k=1}^{9}X_{ijk}\beta_k(t_{ij})
+\boldsymbol Z_{ij}^{\top}\boldsymbol\gamma+\varepsilon_{ij},
\]
where $\boldsymbol Z_{ij}$ contains the four constant adjustments.  All methods used the same normalized effective cubic B-spline basis with raw dimension $q=8$ and the same preprocessing.  We compared TV-Select with VC-Ridge, Group-Lasso, and Marginal-VC, as in Section~\ref{sec:5}.

Prediction was evaluated by five-fold subject-level cross-validation repeated five times; every block from a subject was assigned to the same fold.  Within each outer training set, tuning parameters were selected by three-fold subject-level cross-validation using minimum mean subject RMSE.  Because the true coefficient functions are unknown, evaluation focused on held-out RMSE, MAE, $R^2$, and correlation.  Functional complexity was measured by the mean integrated squared second derivative over the nine candidate curves (Roughness).  For selection-capable methods, we also recorded the number of selected varying effects and the chance-adjusted pairwise Jaccard stability. Smaller Roughness indicates smoother curves, whereas larger adjusted stability indicates more reproducible selections after accounting for set size.

\subsection{Results}

Table~\ref{tab:real_results} summarizes prediction and structural behavior. VC-Ridge attained the smallest RMSE, but TV-Select was within 0.27\% of this value and within 0.15\% of Group-Lasso. Paired subject-level 95\% intervals for the RMSE, MAE, and $R^2$ differences between TV-Select and every comparator all contained zero. Held-out prediction was essentially indistinguishable among the four methods, with no material predictive cost from the structural and smoothness regularization used by TV-Select.

\begin{table}[h]
\centering
\caption{Subject-level prediction, curve regularity, and structural stability on Sleep-EDF.}
\label{tab:real_results}
\small
\renewcommand{\arraystretch}{1.10}
\setlength{\tabcolsep}{10pt}
\begin{tabular}{lcccc}
\toprule
\multicolumn{5}{c}{\textit{Panel A: held-out prediction}}\\
\cmidrule(lr){1-5}
Method & RMSE & MAE & $R^2$ & Correlation\\
\midrule
TV-Select    & 0.7617 (0.0217) & 0.6216 (0.0185) & 0.2709 (0.0444) & 0.7027 (0.0151)\\
VC-Ridge     & 0.7596 (0.0216) & 0.6206 (0.0183) & 0.2749 (0.0441) & 0.7053 (0.0151)\\
Group-Lasso  & 0.7605 (0.0217) & 0.6211 (0.0183) & 0.2736 (0.0439) & 0.7041 (0.0152)\\
Marginal-VC  & 0.7628 (0.0217) & 0.6225 (0.0184) & 0.2692 (0.0440) & 0.7005 (0.0151)\\
\bottomrule
\end{tabular}

\vspace{0.8em}

\begin{tabular}{lccc}
\toprule
\multicolumn{4}{c}{\textit{Panel B: functional complexity and selection}}\\
\cmidrule(lr){1-4}
Method & Roughness & No.\ varying effects & Adjusted stability\\
\midrule
TV-Select    & 43.65 (2.21)   & 7.12 (0.23) & 0.413 (0.047)\\
VC-Ridge     & 49.62 (5.01)   & 9.00        & N/A\\
Group-Lasso  & 90.73 (10.02)  & 8.52 (0.05) & 0.286 (0.001)\\
Marginal-VC  & 147.34 (40.54) & 6.00 (0.30) & 0.435 (0.083)\\
\bottomrule
\end{tabular}

\vspace{0.4em}

\begin{minipage}{0.95\textwidth}
\footnotesize
\textit{Note:} Parenthetical values are standard errors across repeated outer splits. Roughness is averaged over the nine common candidate curves. VC-Ridge retains all nine varying blocks and therefore has no selection-stability measure.
\end{minipage}
\end{table}

The principal difference was functional regularity. TV-Select had the lowest mean Roughness, 12\% below VC-Ridge, 52\% below Group-Lasso, and 70\% below Marginal-VC. Figure~\ref{fig:real_roughness} shows that the lower roughness persisted across repeated outer splits rather than being driven by one partition. TV-Select also selected fewer varying effects than Group-Lasso (7.12 versus 8.52) and had higher chance-adjusted stability (0.413 versus 0.286). Marginal-VC selected slightly fewer effects and had similar adjusted stability, but its curves were more than three times as rough and its predictive performance was slightly weaker. TV-Select combines near-equivalent prediction with greater structural parsimony than Group-Lasso and smoother temporal effects than all three comparators.

\begin{figure}[h]
\centering
\includegraphics[width=0.75\textwidth]{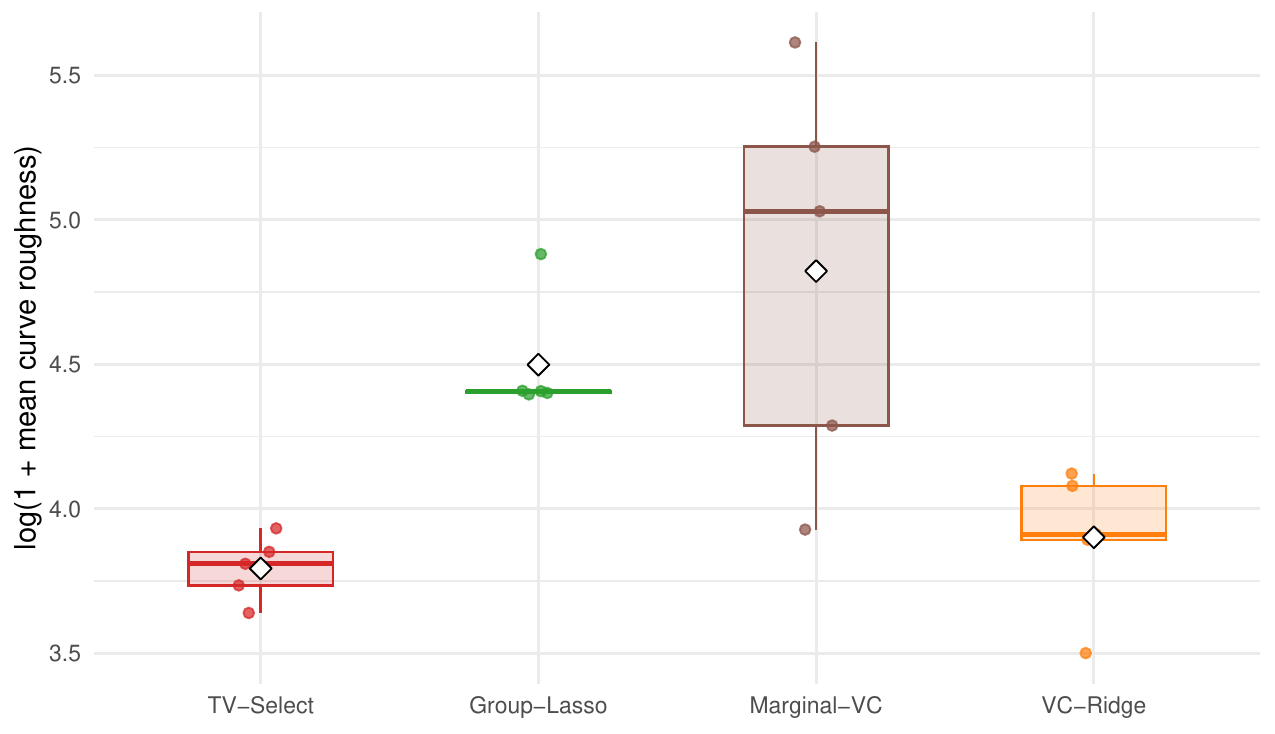}
\caption{Curve roughness across repeated subject-level cross-validation splits.}
\label{fig:real_roughness}
\end{figure}

The full-data TV-Select fit selected five time-varying effects: alpha, sigma, and beta power, EOG line length, and respiratory line length.  Alpha, sigma, and EOG line length were selected in every outer fit, while respiratory line length was selected in 84\% of fits.  Figure~\ref{fig:real_curves} displays the corresponding full-data coefficient estimates.  TV-Select preserves the main temporal trends while suppressing the additional oscillations visible for Marginal-VC and, to a lesser extent, Group-Lasso and VC-Ridge.

The fitted profiles also have plausible physiological interpretations. Alpha power was negatively associated with delta activity throughout the night, with the strongest negative association early in normalized sleep time. The sigma association changed from negative to positive, consistent with a changing relationship between spindle-band and slow-wave activity. EOG line length showed an increasingly positive association through the middle of the night, whereas respiratory line length remained negative and became more pronounced late in the night. The beta effect was comparatively small. These curves describe conditional associations after adjustment for demographics and sleep-stage composition; they are not causal effects or sleep-stage contrasts. They identify smooth, interpretable temporal patterns without sacrificing held-out predictive performance.

\begin{figure}[!tbp]
\centering
\includegraphics[width=0.99\textwidth]{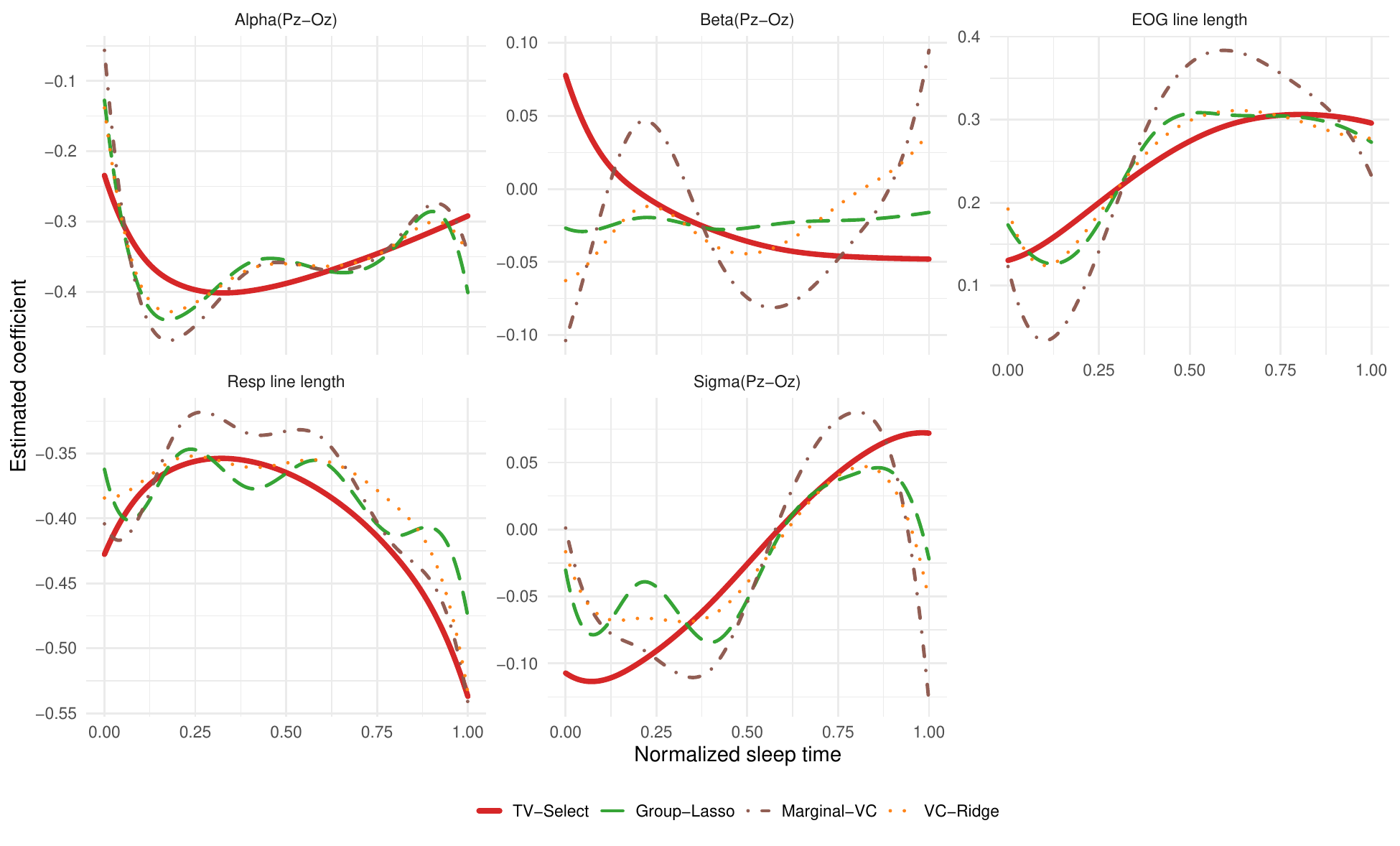}
\caption{Coefficient curves for effects selected as time varying by TV-Select.}
\label{fig:real_curves}
\end{figure}
\FloatBarrier

\section{Conclusion}\label{sec:7}
TV-Select distinguishes zero, constant, and time-varying effects within a single longitudinal varying-coefficient model. The mean and centered deviation define the three classes, and the normalized effective basis removes the rank deficiency created by centering. Group sparsity and roughness regularization are combined in a convex criterion, followed by distinct refits for curve estimation, classification, and inference. The theoretical results cover function estimation, recovery of the varying set, separation of zero and constant effects, and cluster-robust oracle inference for fixed contrasts.

Across the simulations, TV-Select maintains low false-positive rates and competitive estimation and prediction under dependence, heavy-tailed errors, and weak signals. The Sleep-EDF analysis produces smoother and more parsimonious effect trajectories than Group-Lasso with comparable held-out prediction. The improved structural interpretation does not require a substantial predictive loss. Further work is needed for penalized high-dimensional constant effects, covariance-aware efficiency \citep{bai2023nvcssl}, and simultaneous inference for coefficient curves.

\section*{Disclosure statement}
No potential conflict of interest was reported by the author(s).


\begingroup
\bibliographystyle{erae}
\bibliography{references}
\endgroup

\newpage
\appendix
\section*{Proofs}

\subsection*{Notation and profiled criterion}
Let $\gamma_0=(\beta_0,\boldsymbol\mu_0^\top)^\top$.  In normalized effective coordinates, the spline approximation in Assumption~\ref{ass:A2} gives
\begin{equation}\label{eq:app_model_new}
\mathbf y=\bar{\mathbf X}\gamma_0+
\mathbf Z\boldsymbol\theta_0^\ast+
\boldsymbol\varepsilon+\mathbf r.
\end{equation}
For fixed $\boldsymbol\theta$, least squares in $\gamma$ yields the projection $M_X$.  Hence the $\boldsymbol\theta$ component of every minimizer of \eqref{eq:objective} minimizes the convex profiled criterion
\begin{equation}\label{eq:profiled_obj_new}
\mathcal Q_p(\boldsymbol\theta)=
\frac{1}{2n}\|M_X(\mathbf y-\mathbf Z\boldsymbol\theta)\|_2^2
+\lambda_1\|\boldsymbol\theta\|_{2,1}
+\lambda_2\boldsymbol\theta^\top\Omega_p\boldsymbol\theta.
\end{equation}

\subsection*{Proof of Proposition~\ref{prop:algorithm_convergence}}
\begin{proof}
Fix all blocks except $\boldsymbol\theta_k$ and write the corresponding smooth part as $\ell_k$.  Lipschitz continuity of its gradient and the optimality of the proximal step imply
\[
\mathcal Q(\boldsymbol\theta^{(m)})
-\mathcal Q(\boldsymbol\theta^{(m+1)})
\ge
\left\{\frac{1}{2\alpha_k}-\frac{L_k}{2}\right\}
\|\boldsymbol\theta_k^{(m+1)}
-\boldsymbol\theta_k^{(m)}\|_2^2.
\]
Because $\alpha_k\le(1-\nu)L_k^{-1}$, the coefficient on the right is at least $\nu/(2\alpha_k)$ and hence is uniformly positive.  The exact least-squares update of $\gamma$ also cannot increase the objective. Summing the descent inequalities over iterations shows that every block increment of $\boldsymbol\theta$ is square summable.  Moreover, because $\bar{\mathbf X}$ has full column rank, the exact least-squares update is the affine map
\[
\gamma(\boldsymbol\theta)
=
(\bar{\mathbf X}^{\top}\bar{\mathbf X})^{-1}
\bar{\mathbf X}^{\top}
(\mathbf y-\mathbf Z\boldsymbol\theta).
\]
Consequently, for the finite constant $C_\gamma=\|(\bar{\mathbf X}^{\top}\bar{\mathbf X})^{-1}\bar{\mathbf X}^{\top}\mathbf Z\|_{\rm op}$,
\[
\|\gamma^{(m+1)}-\gamma^{(m)}\|_2
\le C_\gamma
\|\boldsymbol\theta^{(m+1)}-\boldsymbol\theta^{(m)}\|_2
\longrightarrow0.
\]
Thus the unpenalized-block increments also vanish.  The bounded-level-set assumption provides an accumulation point.  Since every within-sweep block increment vanishes, any subsequential limit of intermediate iterates coincides with the associated sweep-end limit.  Along a convergent subsequence, continuity of the proximal map gives zero proximal-gradient residual for every varying block, while the affine least-squares map gives the normal equations for $\gamma$.  These are exactly the KKT conditions of the convex criterion \eqref{eq:objective}, so every accumulation point is a global minimizer.

The objective values converge to the common minimum attained by all accumulation points.  The squared-error term is strictly convex in the fitted value, so two minimizers cannot have different fitted values; otherwise their midpoint would have a strictly smaller objective.  Hence the fitted values converge.  If the coefficient minimizer is unique in the normalized effective coordinates, the bounded sequence has only one accumulation point and therefore converges to it.
\end{proof}

\subsection*{Auxiliary Lemmas}
\begin{lemma}[Uniform block score]\label{lem:score_bounds_new}
Under Assumptions~\ref{ass:A1}--\ref{ass:A4},
\begin{align}
\left\|\frac{\tilde{\mathbf Z}^\top
\boldsymbol\varepsilon}{n}\right\|_{2,\infty}
&=O_p(\zeta_n),\label{eq:score_z}\\
\frac{\|P_X\boldsymbol\varepsilon\|_2^2}{n}
&=O_p(p/n).\label{eq:score_x}
\end{align}
More explicitly, on the design event in Assumption~\ref{ass:A1}, there are constants $c,C>0$ such that, for every $t\ge0$,
\begin{equation}\label{eq:score_tail}
\Pr\left\{
\left\|\frac{\tilde{\mathbf Z}^{\top}
\boldsymbol\varepsilon}{n}\right\|_{2,\infty}
>
C\sqrt{\frac{q_e+\log p+t}{n}}
\ \middle|\ \mathcal D
\right\}
\le 2e^{-ct}.
\end{equation}
Moreover, $\|\tilde{\mathbf Z}^\top M_X\mathbf r/n\|_{2,\infty}=b_n$ and $\|\mathbf r\|_2^2/n=O(s_vq_e^{-2r})$.
\end{lemma}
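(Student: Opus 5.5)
The plan is to condition on the design $\mathcal D$, restrict to the design event of Assumption~\ref{ass:A1}, and reduce each score statement to sub-Gaussian concentration for linear forms in the independent subject vectors $\boldsymbol\varepsilon_i$.

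\textbf{Block score.} Fix a block $k$. Write $\tilde{\mathbf Z}_{k,i}$ for the rows of $\tilde{\mathbf Z}_k$ belonging to subject $i$. For a unit vector $u\in\mathbb R^{q_e}$,
\[
\langle u,\tilde{\mathbf Z}_k^\top\boldsymbol\varepsilon\rangle/n=\sum_{i=1}^N\langle \tilde{\mathbf Z}_{k,i}u,\boldsymbol\varepsilon_i\rangle/n .
\]
Each summand is an independent mean-zero sub-Gaussian variable with $\psi_2$ norm at most $K_\varepsilon\|\tilde{\mathbf Z}_{k,i}u\|_2/n$. By the general Hoeffding inequality, the sum has $\psi_2$ norm at most
\[
CK_\varepsilon\Bigl(\sum_i\|\tilde{\mathbf Z}_{k,i}u\|_2^2\Bigr)^{1/2}/n
=CK_\varepsilon\|\tilde{\mathbf Z}_ku\|_2/n
\le CK_\varepsilon\sqrt{K_Z/n},
\]
where the last step uses the bound $\lambda_{\max}(n^{-1}\tilde{\mathbf Z}_k^\top\tilde{\mathbf Z}_k)\le K_Z$. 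Note that $\tilde{\mathbf Z}_k=M_X\mathbf Z_k$ is $\mathcal D$-measurable, since the design is fixed or independent of the errors, so the vectors $\tilde{\mathbf Z}_{k,i}u$ are deterministic given $\mathcal D$. Within-subject dependence is unrestricted, but it is absorbed because Assumption~\ref{ass:A1} bounds $\langle a,\boldsymbol\varepsilon_i\rangle$ for arbitrary $a\in\mathbb R^{n_i}$.

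Next I pass from a fixed direction to the block norm. Take a $1/2$-net $\mathcal N$ of the sphere in $\mathbb R^{q_e}$ with $|\mathcal N|\le5^{q_e}$. Then $\|v\|_2\le2\max_{u\in\mathcal N}\langle u,v\rangle$. A union bound over $\mathcal N$ and over the $p$ blocks gives
\[
\Pr\Bigl\{\|\tilde{\mathbf Z}^\top\boldsymbol\varepsilon/n\|_{2,\infty}>s\ \Bigm|\ \mathcal D\Bigr\}\le 2p\,5^{q_e}\exp(-c n s^2).
\]
Choosing $s=C\sqrt{(q_e+\log p+t)/n}$ with $C$ large absorbs the factor $p5^{q_e}$ and yields \eqref{eq:score_tail}. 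Integrating over the design event, which has probability tending to one, gives \eqref{eq:score_z}.

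\textbf{Projection term.} Since $P_X$ is a rank-$(p+1)$ projection, $\|P_X\boldsymbol\varepsilon\|_2^2=\|U^\top\boldsymbol\varepsilon\|_2^2$ for an orthonormal $n\times(p+1)$ matrix $U$. Then
\[
\mathbb E\bigl(\|U^\top\boldsymbol\varepsilon\|_2^2\mid\mathcal D\bigr)=\sum_{j=1}^{p+1}\mathbb E\langle u_j,\boldsymbol\varepsilon\rangle^2 .
\]
Each term splits across subjects as $\sum_i\mathbb E\langle u_{j,i},\boldsymbol\varepsilon_i\rangle^2\le CK_\varepsilon^2\sum_i\|u_{j,i}\|_2^2=CK_\varepsilon^2$; this uses independence and zero mean across subjects to kill the cross terms, and sub-Gaussianity for the second moments. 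Hence the conditional expectation is $O(p+1)$. Markov's inequality then gives \eqref{eq:score_x}, using $p\ge1$ to write $(p+1)/n=O(p/n)$.

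\textbf{Deterministic remainder statements.} The identity $\|\tilde{\mathbf Z}^\top M_X\mathbf r/n\|_{2,\infty}=b_n$ is just the definition of $b_n$ in Assumption~\ref{ass:A2}. Likewise, $\|\mathbf r\|_2^2/n=O(s_vq_e^{-2r})$ is assumed there directly.

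\textbf{Main obstacle.} The only delicate step is the block-norm concentration under clustered dependence. It requires checking that the general Hoeffding bound applies to sums of independent but non-identically scaled subject contributions, with the constant depending only on $K_\varepsilon$. The net argument then costs only the $q_e$ term in the exponent.
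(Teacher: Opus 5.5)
Your proposal is correct and follows the paper's proof essentially step for step. Both use a fixed-direction sub-Gaussian bound through the subject-level decomposition and $K_Z$, then a $1/2$-net of size $5^{q_e}$ with a union bound over the $p$ blocks. Both handle $\|P_X\boldsymbol\varepsilon\|_2^2$ with a second-moment/trace bound over the rank-$(p+1)$ projection followed by Markov, and both read the remainder statements directly off Assumption~\ref{ass:A2}. One small improvement: the paper simply invokes ``conditional covariance boundedness,'' whereas you derive the subject-level second-moment bound explicitly from the $\psi_2$ condition.
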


\begin{proof}
Because $\tilde{\mathbf Z}_k=M_X\mathbf Z_k$ and $M_X$ is symmetric idempotent, $\tilde{\mathbf Z}_k^\top M_X\boldsymbol\varepsilon=\tilde{\mathbf Z}_k^\top\boldsymbol\varepsilon$. Partition $\tilde{\mathbf Z}_k$ by subject.  Conditional on the design, for fixed $\|u\|_2=1$,
\[
\frac1n u^\top\tilde{\mathbf Z}_k^\top\boldsymbol\varepsilon =\frac1n\sum_{i=1}^N
u^\top\tilde{\mathbf Z}_{k,i}^\top\boldsymbol\varepsilon_i
\]
is a sum of independent sub-Gaussian variables. Its squared sub-Gaussian norm is bounded by
\[
\frac{K_\varepsilon^2}{n^2}
\sum_{i=1}^N\|\tilde{\mathbf Z}_{k,i}u\|_2^2
\le \frac{K_\varepsilon^2K_Z}{n}.
\]
A $1/2$-net of the unit sphere in $\mathbb R^{q_e}$ has at most $5^{q_e}$ elements.  Applying the sub-Gaussian tail bound on this net and then taking a union bound over $p$ blocks gives, for a sufficiently large constant $C$,
\[
\Pr\left\{
\max_{k\le p}
\left\|\frac{\tilde{\mathbf Z}_k^\top
\boldsymbol\varepsilon}{n}\right\|_2
>C\sqrt{\frac{q_e+\log p+t}{n}}
\ \middle|\ \mathcal D
\right\}
\le 2e^{-ct}.
\]
This proves \eqref{eq:score_tail}, and taking $t$ fixed proves \eqref{eq:score_z}.  For \eqref{eq:score_x}, conditional covariance boundedness and $\operatorname{rank}(P_X)\le p+1$ give
\[
\mathbb{E}(\|P_X\boldsymbol\varepsilon\|_2^2\mid\mathcal D)
=\operatorname{tr}\{P_X\operatorname{Var}
(\boldsymbol\varepsilon\mid\mathcal D)\}
\le C(p+1).
\]
Markov's inequality proves the claim. The two approximation statements are Assumption~\ref{ass:A2}.
\end{proof}

\subsection*{Proof of Theorem~\ref{thm:T1_rate}}
\begin{proof}
Write $\Delta=\hat{\boldsymbol\theta}-\boldsymbol\theta_0^\ast$ and
\[
h=\frac{\tilde{\mathbf Z}^\top
(\boldsymbol\varepsilon+M_X\mathbf r)}{n},\qquad
d_S=2\lambda_2\Omega_S\boldsymbol\theta_{0S}^\ast.
\]
Lemma~\ref{lem:score_bounds_new} and Assumption~\ref{ass:A4} imply that, on an event whose probability tends to one,
\[
\|h\|_{2,\infty}\le\lambda_1/4,\qquad
\|d_S\|_{2,\infty}\le\lambda_1/4,
\]
because $A_n\to\infty$ and $c_0$ is sufficiently small. Optimality of \eqref{eq:profiled_obj_new} at $\hat{\boldsymbol\theta}$, followed by expansion of the quadratic roughness term, gives
\begin{align}
\frac{\|\tilde{\mathbf Z}\Delta\|_2^2}{2n}
+\lambda_2\Delta^\top\Omega_p\Delta
+\lambda_1\{
\|\boldsymbol\theta_0^\ast+\Delta\|_{2,1}
-\|\boldsymbol\theta_0^\ast\|_{2,1}\}
\le \langle h,\Delta\rangle-\langle d_S,\Delta_S\rangle.
\label{eq:basic_profile}
\end{align}
By block Hölder inequality and decomposability,
\[
\langle h,\Delta\rangle
\le\frac{\lambda_1}{4}\|\Delta\|_{2,1},\qquad
-\langle d_S,\Delta_S\rangle
\le\frac{\lambda_1}{4}\|\Delta_S\|_{2,1},
\]
and
\[
\|\boldsymbol\theta_0^\ast+\Delta\|_{2,1}
-\|\boldsymbol\theta_0^\ast\|_{2,1}
\ge\|\Delta_{S^c}\|_{2,1}-\|\Delta_S\|_{2,1}.
\]
Dropping the nonnegative roughness quadratic in \eqref{eq:basic_profile} therefore yields
\begin{equation}\label{eq:cone_and_basic}
\frac{\|\tilde{\mathbf Z}\Delta\|_2^2}{2n}
+\frac{\lambda_1}{2}\|\Delta_{S^c}\|_{2,1}
\le\frac{3\lambda_1}{2}\|\Delta_S\|_{2,1}.
\end{equation}
In particular, $\Delta\in\mathcal C(S,3)$.  Since $\|\Delta_S\|_{2,1}\le\sqrt{s_v}\|\Delta_S\|_2$, \eqref{eq:compatibility} and \eqref{eq:cone_and_basic} imply
\[
\frac{\|\tilde{\mathbf Z}\Delta\|_2}{\sqrt n}
\le \frac{3\sqrt{s_v}}{\kappa_Z}\lambda_1,\qquad
\|\Delta_S\|_2
\le \frac{3\sqrt{s_v}}{\kappa_Z^2}\lambda_1.
\]
Substitution back into \eqref{eq:cone_and_basic} also gives $\|\Delta_{S^c}\|_{2,1}=O_p(s_v\lambda_1)$, proving \eqref{eq:t1_profile_prediction}--\eqref{eq:t1_parameter}.

The least-squares normal equation for $\hat\gamma$ gives the exact orthogonal decomposition
\begin{equation*}
\bar{\mathbf X}(\hat\gamma-\gamma_0)+\mathbf Z\Delta
=P_X(\boldsymbol\varepsilon+\mathbf r)
+M_X\mathbf Z\Delta.
\end{equation*}
The two terms on the right lie in orthogonal subspaces.  By Lemma~\ref{lem:score_bounds_new}, Assumption~\ref{ass:A2}, and \eqref{eq:t1_profile_prediction},
\[
\frac1n\|\bar{\mathbf X}(\hat\gamma-\gamma_0)+\mathbf Z\Delta\|_2^2
=O_p\left(\frac pn+s_vq_e^{-2r}+s_v\lambda_1^2\right).
\]
Applying \eqref{eq:joint_identifiability} to $(\hat\gamma-\gamma_0,\Delta)$ proves
\eqref{eq:t1_joint_parameter}.

It remains to prove the maximum bound.  The active KKT equations are
\begin{equation}\label{eq:active_kkt_t1}
G_S\Delta_S
=h_S-\lambda_1\hat u_S-d_S-H_{S S^c}\Delta_{S^c},
\qquad \|\hat u_S\|_{2,\infty}\le1.
\end{equation}
Assumption~\ref{ass:A3} and
$\|\Delta_{S^c}\|_{2,1}=O_p(s_v\lambda_1)$ imply
\[
\|H_{S S^c}\Delta_{S^c}\|_{2,\infty}
\le C_H\|\Delta_{S^c}\|_{2,1}
=O_p(\lambda_1)
\]
when $s_v=O(1)$; here the roughness matrix is block diagonal, so the off-diagonal blocks of $H$ are exactly the corresponding empirical Gram blocks.  Every term on the right of \eqref{eq:active_kkt_t1} is therefore $O_p(\lambda_1)$ in block maximum norm, and \eqref{eq:block_inverse_condition} yields
\[
\|\Delta_S\|_{2,\infty}
\le C_G\|h_S-\lambda_1\hat u_S-d_S-
H_{SS^c}\Delta_{S^c}\|_{2,\infty}
=O_p(\lambda_1).
\]
This proves
\eqref{eq:t1_max}.  The bound
\[
\|\hat g_k-g_{0k}\|_{L_2}
\le
\|\boldsymbol C(\cdot)^\top
(\hat{\boldsymbol\theta}_k-\boldsymbol\theta_{0k}^\ast)\|_{L_2}
+K_gq_e^{-r}
=\|\hat{\boldsymbol\theta}_k-\boldsymbol\theta_{0k}^\ast\|_2
+K_gq_e^{-r},
\]
which proves \eqref{eq:t1_function}.
\end{proof}

\subsection*{Proof of Theorem~\ref{thm:T2_select}}
\begin{proof}
We use a primal--dual witness argument for \eqref{eq:profiled_obj_new}.  Set the inactive blocks to zero and let $\tilde{\boldsymbol\theta}_S$ be the unique restricted minimizer.  Its active KKT equation is
\begin{equation*}
G_S(\tilde{\boldsymbol\theta}_S-\boldsymbol\theta_{0S}^\ast)
=h_S-d_S-\lambda_1\tilde u_S,
\qquad \|\tilde u_S\|_{2,\infty}\le1.
\end{equation*}
The score lemma, smoothing-bias condition, and active inverse condition give
\[
\|\tilde{\boldsymbol\theta}_S-\boldsymbol\theta_{0S}^\ast
\|_{2,\infty}
\le C_G\{\|h_S\|_{2,\infty}
+\|d_S\|_{2,\infty}+\lambda_1\}
\le (C_G+o_p(1))\lambda_1.
\]
Consequently,
\[
\min_{k\in S}\|\tilde{\boldsymbol\theta}_k\|_2
\ge
\min_{k\in S}\|\boldsymbol\theta_{0k}^\ast\|_2
-(C_G+o_p(1))\lambda_1>0
\]
with probability tending to one by the beta-min condition.  Thus every restricted active block is nonzero.

For $k\notin S$, the negative smooth gradient at the restricted solution is
\begin{align*}
R_k
&=h_k-H_{kS}
(\tilde{\boldsymbol\theta}_S-\boldsymbol\theta_{0S}^\ast)\\
&=
\{h_k-H_{kS}G_S^{-1}(h_S-d_S)\}
+\lambda_1H_{kS}G_S^{-1}\tilde u_S.
\end{align*}
The second term is at most $(1-\eta)\lambda_1$ by \eqref{eq:irrepresentable}.  The same operator bound, the clustered score lemma, $A_n\to\infty$, and $b_n+\delta_n=o(\lambda_1)$ imply
\[
\begin{split}
\max_{k\notin S}
\|h_k-H_{kS}G_S^{-1}(h_S-d_S)\|_2
&\le \|h\|_{2,\infty}
 +(1-\eta)\|h_S-d_S\|_{2,\infty}\\
&=o_p(\lambda_1).
\end{split}
\]
Indeed, $\|h\|_{2,\infty}=O_p(\zeta_n)+b_n=o_p(\lambda_1)$ and $\|d_S\|_{2,\infty}\le\delta_n=o(\lambda_1)$.  Hence the last display is at most $\eta\lambda_1/2$ with probability tending to one, and $\max_{k\notin S}\|R_k\|_2\le(1-\eta/2)\lambda_1<\lambda_1$. This is strict dual feasibility.  Taking inactive subgradients $R_k/\lambda_1$ makes the restricted primal solution satisfy the full KKT system.

To establish uniqueness, let $\boldsymbol\theta'$ be any other minimizer and use the KKT subgradient just constructed at $(\tilde{\boldsymbol\theta}_S^\top,0^\top)^\top$.  The subgradient inequality for an inactive block is strict whenever $\boldsymbol\theta'_k\ne0$, because
\[
\|\boldsymbol\theta'_k\|_2-
\left\langle R_k/\lambda_1,\boldsymbol\theta'_k\right\rangle
\ge
\{1-\|R_k/\lambda_1\|_2\}\|\boldsymbol\theta'_k\|_2>0.
\]
Equality of the objective values therefore forces $\boldsymbol\theta'_{S^c}=0$. On the restricted space, $G_S\succ0$ makes the smooth part strictly convex, so $\boldsymbol\theta'_S=\tilde{\boldsymbol\theta}_S$.  Thus the full minimizer is unique.  Together with the beta-min result, this proves $\Pr(\widehat{\mathcal S}_{\mathrm{vary}}=S)\to1$.
\end{proof}

\subsection*{Proof of Corollary~\ref{cor:smooth_refit_rate}}
\begin{proof}
Let $\mathcal A_n=\{\widehat S=S\}$; by Theorem~\ref{thm:T2_select}, $\Pr(\mathcal A_n)\to1$.  Work on $\mathcal A_n$ and write
\[
\alpha_0=(\gamma_0^\top,
(\boldsymbol\theta_{0S}^\ast)^\top)^\top,\qquad
P_F=\operatorname{blockdiag}
\{0_{p+1},\,2\hat\lambda_2(I_{s_v}\otimes\Omega)\}.
\]
The normal equation for \eqref{eq:smooth_refit} and \eqref{eq:app_model_new} give the exact identity
\begin{equation}\label{eq:smooth_refit_expansion}
\hat\alpha^{\,F}-\alpha_0
=
\left\{
\frac{(D_S^{\rm all})^\top D_S^{\rm all}}{n}+P_F
\right\}^{-1}
\left[
\frac{(D_S^{\rm all})^\top
(\boldsymbol\varepsilon+\mathbf r)}{n}
-P_F\alpha_0
\right].
\end{equation}
The assumed lower eigenvalue bound and $P_F\succeq0$ imply that the inverse in \eqref{eq:smooth_refit_expansion} has uniformly bounded operator norm.  Conditional sub-Gaussian concentration, bounded cluster size, and $d_F=o(n)$ yield
\[
\left\|
\frac{(D_S^{\rm all})^\top\boldsymbol\varepsilon}{n}
\right\|_2
=O_p\{\sqrt{d_F/n}\}.
\]
Moreover, the upper eigenvalue bound and Assumption~\ref{ass:A2} give
\[
\left\|
\frac{(D_S^{\rm all})^\top\mathbf r}{n}
\right\|_2
\le
\lambda_{\max}^{1/2}\!\left\{
\frac{(D_S^{\rm all})^\top D_S^{\rm all}}{n}
\right\}
\frac{\|\mathbf r\|_2}{\sqrt n}
=O(\sqrt{s_v}\,q_e^{-r}),
\]
whereas $\|P_F\alpha_0\|_2\le\sqrt{s_v}\,\delta_{F,n}$. Substitution in \eqref{eq:smooth_refit_expansion} and squaring prove \eqref{eq:smooth_refit_parameter_rate}.

For each $k\in S$, normalized effective coordinates and the spline approximation bound imply
\[
\|\hat g_k^{\,F}-g_{0k}\|_{L_2}^2
\le
2\|\hat{\boldsymbol\theta}_k^{\,F}
-\boldsymbol\theta_{0k}^\ast\|_2^2
+2K_g^2q_e^{-2r}.
\]
Summing this inequality proves \eqref{eq:smooth_refit_function_rate}.  The upper eigenvalue bound for $D_S^{\rm all}$ converts the parameter bound into the same in-sample prediction order.  Since $\Pr(\mathcal A_n^c)\to0$, all conclusions hold unconditionally.
\end{proof}

\subsection*{Proof of Lemma~\ref{lem:classification_refit_rate}}
\begin{proof}
Write $D=D_S^{\rm all}$ and $d_R=p+1+s_vq_e$ for its number of columns.  The upper eigenvalue bound in Assumption~\ref{ass:A6} implies
\[
\max_{\ell\le d_R}\frac{\|D_\ell\|_2^2}{n}
\le \lambda_{\max}(D^\top D/n)
\le \kappa_R^{-1}.
\]
Conditional on the design, each coordinate of $D^\top\boldsymbol\varepsilon/n$ is a sum of independent subject-level sub-Gaussian variables.  By Assumption~\ref{ass:A1}, its sub-Gaussian norm is at most $K_\varepsilon\|D_\ell\|_2/n\le Cn^{-1/2}$.  Hence, for constants $c_1,c_2>0$,
\[
\Pr\!\left(
\left\|\frac{D^\top\boldsymbol\varepsilon}{n}\right\|_\infty>u
\ \middle|\ \mathcal D
\right)
\le 2d_R\exp(-c_1nu^2)
\]
for $0<u<c_2$.  Taking
$u=C\sqrt{\log(d_R)/n}$ and using
$\log d_R=O(\log p)$ proves \eqref{eq:refit_score_bound}.

For the approximation term, Cauchy--Schwarz, the same column-norm bound, and Assumption~\ref{ass:A2} give
\[
\left\|\frac{D^\top\mathbf r}{n}\right\|_\infty
\le
\max_{\ell\le d_R}\frac{\|D_\ell\|_2}{\sqrt n}
\frac{\|\mathbf r\|_2}{\sqrt n}
\le
\kappa_R^{-1/2}O(\sqrt{s_v}\,q_e^{-r})
=O(q_e^{-r}),
\]
where the last equality uses $s_v=O(1)$.  This proves \eqref{eq:refit_remainder_bound}.

The oracle least-squares refit has the exact expansion
\[
\begin{pmatrix}
\hat\gamma^{\,R}-\gamma_0\\
\hat{\boldsymbol\theta}^{\,R}_S-\boldsymbol\theta_{0S}^\ast
\end{pmatrix}
=B_S\frac{D^\top(\boldsymbol\varepsilon+\mathbf r)}{n}.
\]
For every constant-effect row $\ell\in\mathcal I_\mu$,
\[
\left|e_\ell^\top B_S
\frac{D^\top(\boldsymbol\varepsilon+\mathbf r)}{n}\right|
\le
\|e_\ell^\top B_S\|_1
\left\|\frac{D^\top(\boldsymbol\varepsilon+\mathbf r)}{n}\right\|_\infty.
\]
Taking the maximum over $\mathcal I_\mu$, applying the row bound in Assumption~\ref{ass:A6}, and using \eqref{eq:refit_score_bound}--\eqref{eq:refit_remainder_bound} proves \eqref{eq:refit_stability}.
\end{proof}

\subsection*{Proof of Corollary~\ref{cor:mu_threshold}}
\begin{proof}
On the event $\widehat{\mathcal S}_{\mathrm{vary}}=S$, which has probability tending to one, the refit in \eqref{eq:structure_refit} uses $D_S^{\rm all}=[\bar{\mathbf X},\mathbf Z_S]$.  Its exact least-squares expansion and \eqref{eq:app_model_new} give
\[
\begin{pmatrix}
\hat\gamma^{\,R}-\gamma_0\\
\hat{\boldsymbol\theta}^{\,R}_S-\boldsymbol\theta_{0S}^\ast
\end{pmatrix}
=
\left\{(D_S^{\rm all})^\top D_S^{\rm all}/n\right\}^{-1}
\frac{(D_S^{\rm all})^\top
(\boldsymbol\varepsilon+\mathbf r)}{n}.
\]
Lemma~\ref{lem:classification_refit_rate} therefore gives
$\|\hat{\boldsymbol\mu}^{\,R}-\boldsymbol\mu_0\|_\infty
=O_p(r_{\mu,n})$.
Since $r_{\mu,n}=o(\tau_N)$, every zero coordinate is below
$\tau_N$ with probability tending to one.  For
$k\in\mathcal S_{\mathrm{const}}$,
\[
|\hat\mu_k^{\,R}|
\ge|\mu_{0k}|-
|\hat\mu_k^{\,R}-\mu_{0k}|
>\tau_N
\]
with probability tending to one by the stated beta-min condition. Uniformity of the sup-norm bound proves simultaneous recovery of the two sets.
\end{proof}

\subsection*{Proof of Theorem~\ref{thm:T3_oracle}}
\begin{proof}
Let $\mathcal A_n$ be the event that both structural sets are recovered. Theorem~\ref{thm:T2_select} and Corollary~\ref{cor:mu_threshold} imply $\Pr(\mathcal A_n)\to1$.  On $\mathcal A_n$, the selected design equals $D^{or}=[\mathbf X_c,W]$.  For arbitrary selected-model starting values $(\hat{\boldsymbol\mu}^{\,P}_c,\hat\eta^P)$, the one-step estimator obeys
\begin{align*}
\hat{\boldsymbol\mu}^{\,DB}_c
&=\hat{\boldsymbol\mu}^{\,P}_c+
A_{c,n}^{-1}\frac{\tilde{\mathbf X}_c^\top
\{\mathbf y-\mathbf X_c\hat{\boldsymbol\mu}^{\,P}_c-W\hat\eta^P\}}{n}\\
&=A_{c,n}^{-1}\frac{\tilde{\mathbf X}_c^\top\mathbf y}{n}
=A_{c,n}^{-1}\frac{\tilde{\mathbf X}_c^\top M_W\mathbf y}{n},
\end{align*}
because $\tilde{\mathbf X}_c^\top W=0$ and $\tilde{\mathbf X}_c^\top\mathbf X_c/n=A_{c,n}$. The Frisch--Waugh--Lovell theorem identifies the last expression with the $\mathbf X_c$ coefficient in the oracle least-squares refit.  This proves \eqref{eq:db_oracle_equivalence}.

Using $\tilde{\mathbf X}_c=M_W\mathbf X_c$ and \eqref{eq:app_model_new} gives the exact expansion
\begin{equation}\label{eq:oracle_expansion}
\hat{\boldsymbol\mu}^{\,DB}_c-\boldsymbol\mu_{0,c}
=A_{c,n}^{-1}
\frac{\tilde{\mathbf X}_c^\top\boldsymbol\varepsilon}{n}
+A_{c,n}^{-1}
\frac{\tilde{\mathbf X}_c^\top M_W\mathbf r}{n}.
\end{equation}
The sufficient condition stated after \eqref{eq:t3_bias} follows because $\|A_{c,n}^{-1}\|_{\rm op}=O_p(1)$, $\|\tilde{\mathbf X}_c\|_{\rm op}/\sqrt n=O_p(1)$, and $\|M_W\|_{\rm op}\le1$ imply
\[
\sqrt n\left\|
A_{c,n}^{-1}\frac{\tilde{\mathbf X}_c^\top M_W\mathbf r}{n}
\right\|_2
\le O_p(1)\|\mathbf r\|_2
=O_p(\sqrt{n s_v}\,q_e^{-r}).
\]
Thus $\sqrt{n s_v}\,q_e^{-r}\to0$ implies \eqref{eq:t3_bias}. The second term in \eqref{eq:oracle_expansion} is therefore $o_p(n^{-1/2})$.  For the first, write
\[
\frac{\tilde{\mathbf X}_c^\top\boldsymbol\varepsilon}{\sqrt n}
=\sum_{i=1}^N
\frac{\tilde{\mathbf X}_{c,i}^\top\boldsymbol\varepsilon_i}{\sqrt n}.
\]
Conditional on the design, the summands are independent across subjects, have mean zero, and have covariance sum $\Gamma_{c,n}$. The Cramér--Wold device and \eqref{eq:cluster_lindeberg} give the multivariate Lindeberg--Feller limit
\[
\Gamma_{c,n}^{-1/2}
\frac{\tilde{\mathbf X}_c^\top\boldsymbol\varepsilon}{\sqrt n}
\xrightarrow{d}N(0,I).
\]
Combining this limit with \eqref{eq:t3_A}, \eqref{eq:t3_Gamma}, \eqref{eq:oracle_expansion}, and Slutsky's theorem proves the asserted normal limit on $\mathcal A_n$.  Since $\Pr(\mathcal A_n^c)\to0$, the same limit holds unconditionally.

For sandwich consistency, set $u_i=\tilde{\mathbf X}_{c,i}^\top\boldsymbol\varepsilon_i$ and $\hat u_i=\tilde{\mathbf X}_{c,i}^\top\hat{\boldsymbol\varepsilon}_i$.  Independence across subjects, bounded fourth moments, bounded cluster size, and the bounded design condition imply
\begin{equation}\label{eq:sandwich_lln}
\frac1n\sum_{i=1}^N
\{u_iu_i^\top-\mathbb{E}(u_iu_i^\top\mid\mathcal D)\}=o_p(1).
\end{equation}
To see this directly, conditional independence and the fixed dimension of $S_c$ give
\[
\mathbb{E}\!\left[
\left\|\frac1n\sum_{i=1}^N
\{u_iu_i^\top-\mathbb{E}(u_iu_i^\top\mid\mathcal D)\}\right\|_F^2
\middle|\mathcal D\right]
\le \frac{K}{n^2}\sum_{i=1}^N1
=O(n^{-1}),
\]
because $N\asymp n$.  Conditional Chebyshev's inequality proves \eqref{eq:sandwich_lln}.

Let $P_{or}$ be the projection onto the oracle design.  Oracle residuals satisfy
\[
\hat{\boldsymbol\varepsilon}-\boldsymbol\varepsilon
=-P_{or}\boldsymbol\varepsilon+(I-P_{or})\mathbf r.
\]
Conditional covariance boundedness and $\operatorname{rank}(P_{or})\le |S_c|+d_n$ imply
\[
\frac1n\|P_{or}\boldsymbol\varepsilon\|_2^2 =O_p\{(|S_c|+d_n)/n\}=o_p(1).
\]
Assumptions~\ref{ass:A2} and \ref{ass:A4} give $\|\mathbf r\|_2^2/n=o(1)$.  The bounded cluster size and $\max_i\|\tilde{\mathbf X}_{c,i}\|_{\rm op}=O(1)$ therefore give the explicit bound
\[
\frac1n\sum_{i=1}^N\|\hat u_i-u_i\|_2^2
\le
\frac{K}{n}\|
-P_{or}\boldsymbol\varepsilon+(I-P_{or})\mathbf r
\|_2^2=o_p(1).
\]
Hence
\begin{equation}\label{eq:residual_score_consistency}
\frac1n\sum_{i=1}^N\|\hat u_i-u_i\|_2^2=o_p(1).
\end{equation}
Moreover, \eqref{eq:sandwich_lln} and $\Gamma_{c,n}\to\Gamma_c$ imply $n^{-1}\sum_i\|u_i\|_2^2=O_p(1)$.  By Cauchy--Schwarz, \eqref{eq:residual_score_consistency} and
\[
\frac1n\sum_i
\|\hat u_i\hat u_i^\top-u_iu_i^\top\|_F
\le
\left\{\frac1n\sum_i\|\hat u_i-u_i\|_2^2\right\}^{1/2}
\left\{\frac1n\sum_i(\|\hat u_i\|_2+\|u_i\|_2)^2\right\}^{1/2}
=o_p(1).
\]
Consequently,
\[
\frac1n\sum_{i=1}^N
(\hat u_i\hat u_i^\top-u_iu_i^\top)=o_p(1).
\]
Combining this display with \eqref{eq:sandwich_lln} proves $\hat\Gamma_c-\Gamma_{c,n}=o_p(1)$. Continuous mapping and $A_{c,n}\to\Sigma_c$ establish consistency of the sandwich covariance estimator.
\end{proof}

\end{document}